\newtheorem{remarknum}{Remark}
\renewenvironment{proof}{\noindent\textbf{Proof.}}{\mbox{}\hfill $\Box$\\}
\newcommand\set[1]{\ensuremath{\{ #1 \} }}
\renewcommand\int[1]{\ensuremath{\llbracket#1\rrbracket}}
\newcommand\cur[1]{\ensuremath{{\cal{#1}}}}
\newcommand\commentaire[1]{}
\newcommand{\cutrank}{\mbox{\rm cutrank}}
\title{Faster and Enhanced Inclusion-Minimal Cograph Completion
\thanks{This project has received funding from the European Union's Horizon 2020 research and innovation programme under the Marie Sklodowska-Curie grant agreement No 749022 and has been supported by Pareto-Optimal Parameterized Algorithms, ERC Starting Grant 715744.}
\thanks{This work was partially funded by the PICS program of CNRS, by the Vietnam National Foundation for Science and Technology Development (NAFOSTED) under the grant number 101.99-2016.16 and by the Vietnam Institute for Advanced Study in Mathematics (VIASM).}
}
\author{Christophe Crespelle\inst{1} \and Daniel Lokshtanov\inst{1} \and Thi Ha Duong Phan\inst{2} \and Eric Thierry\inst{3}}
\authorrunning{C. Crespelle \and D. Lokshtanov \and T.H.D. Phan \and E. Thierry}
\institute{
University of Bergen, Department of Informatics, N-5020 Bergen, NORWAY\\
\email{christophe.crespelle@uib.no, daniello@ii.uib.no}
\and
Institute of Mathematics, Vietnam Academy of Science and Technology,\\ 18 Hoang Quoc Viet, Hanoi, Vietnam\\
\email{phanhaduong@math.ac.vn}
\and
Univ Lyon, ENS de Lyon, UCB Lyon 1, CNRS, Inria, LIP UMR 5668,\\ 15 parvis Ren{\'e} Descartes, F-69342, Lyon, FRANCE\\
\email{eric.thierry@ens-lyon.fr}
}
\begin{document}

\maketitle




\begin{abstract}
\begin{sloppypar}
We design two incremental algorithms for computing an inclusion-minimal completion of an arbitrary graph into a cograph. The first one is able to do so while providing an additional property which is crucial in practice to obtain inclusion-minimal completions using as few edges as possible : it is able to compute a minimum-cardinality completion of the neighbourhood of the new vertex introduced at each incremental step. It runs in $O(n+m')$ time, where $m'$ is the number of edges in the completed graph. This matches the complexity of the algorithm in~\cite{LMP10} and positively answers one of their open questions.
Our second algorithm improves the complexity of inclusion-minimal completion to $O(n+m\log^2 n)$ when the additional property above is not required. Moreover, we prove that many very sparse graphs, having only $O(n)$ edges, require $\Omega(n^2)$ edges in any of their cograph completions. For these graphs, which include many of those encountered in applications, the improvement we obtain on the complexity scales as $O(n/\log^2 n)$.
\end{sloppypar}
\end{abstract}

\section{Introduction}

We consider the problem of completion of an arbitrary graph into a \emph{cograph}, i.e. a graph with no induced path on $4$ vertices. This is a particular case of \emph{graph modification problem}, in which one wants to perform elementary modifications to an input graph, typically adding and removing edges and vertices, in order to obtain a graph belonging to a given target class of graphs, which satisfies some additional property compared to the input. Ideally, one would like to do so by performing a minimum number of elementary modifications. This is a fundamental problem in graph algorithms, which 
corresponds to the notion of projection in geometry: given an element $a$ of a ground set $X$ equipped with a distance and a subset $S\subseteq X$, find an element of $S$ that is closest to $a$ for the provided distance (here, the number of elementary modifications performed on the graph). This is also the meaning of modification problems in algorithmic graph theory: they answer the question to know how far is a given graph from satisfying a target property.


Here, we consider the modification problem called \emph{completion}, where only one operation is allowed: adding an edge. In this case, the quantity to be minimised, called the \emph{cost} of the completion, is the number of edges added, which are called \emph{fill edges}. The particular case of completion problems has been shown very useful in algorithmic graph theory and several other contexts. These problems are closely related to some important graph parameters, such as treewidth~\cite{ACP87}, and can help to efficiently solve problems that otherwise are hard on the input graph~\cite{BK06}. They are also useful for other algorithmic problems arising in computer science, such as sparse matrix multiplication~\cite{R72}, and in other disciplines such as archaeology~\cite{K69}, molecular biology~\cite{BDFHW95} and genomics, where they played a key role in the mapping of the human genome~\cite{GGK+95,K93}.


Unfortunately, finding the minimum number of edges to be added in a completion problem is NP-hard for most of the target classes of interest (see, e.g., the thesis of Mancini~\cite{Man08} for further discussion and references). To deal with this difficulty of computation, the domain has developed a number of approaches. This includes approximation~\cite{NSS00}, restricted input~\cite{BKK+98,BT01,BDK00,KKW98,KKS97,Mei10}, parameterization~\cite{Cai96,DGH+05,KST04,Man10,VHP+09} and inclusion-minimal completions. In the latter approach, one does not ask for a completion having the minimum number of fill edges but only ask for a set of fill edges which is minimal for inclusion, i.e. which does not contain any proper subset of fill edges whose addition also results in a graph in the target class. This is the approach we follow here. In addition to the case of cographs~\cite{LMP10}, it has been followed for many other graph classes, including chordal graphs~\cite{HTV05}, interval graphs~\cite{CT13,OMKF81}, proper interval graphs~\cite{RST08}, split graphs~\cite{HM09}, comparability graphs~\cite{HMP08} and permutation graphs~\cite{CPT15}.

\begin{sloppypar}
The rationale behind the inclusion-minimal approach is that minimum-cardinality completions are in particular inclusion-minimal. Therefore, if one is able to sample\footnote{Usually, minimal completion algorithms are not fully deterministic. There are some choices to be made arbitrarily along the algorithm and different choices lead to different minimal completions.} efficiently the space of inclusion-minimal completions, one can compute several of them, pick the one of minimum cost and hope to get a value close to the optimal one. One of the reason of the success of inclusion-minimal completion algorithms is that this heuristic approach was shown to perform quite well in practice~\cite{BHS03,BHT01}. The second reason of this success, which is a key point for the approach, is that it is usually possible to design algorithms of low complexity for the inclusion-minimal relaxation of completion problems.
\end{sloppypar}

\subsubsection{Related work.}
Modification problems into the class of cographs have already received a great amount of attention~\cite{GHP12,HFW+15,HWL+15,LWG+12,LMP10}, as well as modification problems into some of its subclasses, such as \emph{quasi-threshold graphs}~\cite{BHS+15} and \emph{threshold graphs}~\cite{DDL+15}. One reason for this is that cographs are among the most widely studied graph classes. They have been discovered independently in many contexts~\cite{CLS81} and they are known to admit very efficient algorithms for problems that are hard in general~\cite{BLS99}
. Moreover, very recently, cograph modification was shown a powerful approach to solve problems arising in complex networks analysis, e.g. community detection~\cite{JGG+15}, inference of phylogenomics~\cite{HWL+15} and modelling~\cite{Cre17}. The modification problem into the class of quasi-threshold graphs has also been used and it revealed that complex networks encountered in some contexts are actually very close to be quasi-threshold graphs~\cite{BHS+15}, in the sense that only a few modifications are needed to transform them into quasi-threshold graphs.
This growing need for treating real-world datasets, whose size is often huge, asks for more efficient algorithms both with regard to the running time and with regard to the quality (number of modifications) of the solution returned by the algorithm.

\subsubsection{Our results.}

Our main contribution is to design two algorithms for inclusion-minimal cograph completion. The first one (Section~\ref{sec:algo-compl}) is an improvement of the incremental algorithm in~\cite{LMP10}. It runs in the same $O(n+m')$ complexity, where $m'$ is the number of edges in the completed graph, and is in addition able to select one minimum-cardinality completion of the neighbourhood of the new incoming vertex at each incremental step of the algorithm, which is an open question in~\cite{LMP10} (Question~3 in the conclusion) which we positively answer here. It must be clear that this does not guarantee that the completion computed at the end of the algorithm has minimum cardinality 
but this feature is highly desirable in practice to 
obtain completions using as few fill edges as possible.

When this additional feature is not required, our second algorithm (Section~\ref{sec:improv-algo}) solves the inclusion-minimal problem in $O(n+m\, log^2 n)$ time, which only depends on the size of the input. Furthermore, we prove that many sparse graphs, namely those having mean degree fixed to a constant, require $\Omega(n^2)$ edges in any of their cograph completions. This result is worth of interest in itself and implies that, for such graphs, which have only $O(n)$ edges, the improvement of the complexity we obtain with our second algorithm is quite significant : a factor $n/log^2 n$.

\section{Preliminaries}\label{sec:prel}

All graphs considered here are finite, undirected, simple and loopless. In the following, $G$ is a graph, $V$ (or $V(G)$) is its vertex set and $E$ (or $E(G)$) is its edge set. We use the notation $G=(V,E)$, $n=|V|$ stands for the cardinality of $V$ and $m=|E|$ for the cardinality of $E$.
An edge between vertices $x$ and $y$ will be arbitrarily denoted by $xy$ or $yx$. The neighbourhood of $x$ is denoted by $N(x)$ (or $N_G(x)$) and for a subset $X\subseteq V$, we define $N(X)=(\bigcup_{x\in X} N(x))\setminus X$. The subgraph of $G$ induced by some $X \subseteq V$ is denoted by $G[X]$. 

For a rooted tree $T$ and a node $u\in T$, we denote $parent(u)$, $\cur{C}(u)$, $Anc(u)$ and $Desc(u)$ the {\em parent} and the set of {\em children}, {\em ancestors} and {\em descendants} of $u$ respectively, using the usual terminology and with $u$ belonging to $Anc(u)$ and $Desc(u)$. The \emph{lowest common ancestor} of two nodes $u$ and $v$, denoted $lca(u,v)$, is the lowest node in $T$ which is an ancestor of both $u$ and $v$.
The subtree of $T$ rooted at $u$, denoted by $T_u$, is the tree induced by node $u$ and all its descendants in $T$.
We use two other notions of subtree, which we call \emph{upper tree} and \emph{extracted tree}. The upper tree of a subset of nodes $S$ of $T$ is the tree, denoted $T^{up}_S$, induced by the set $Anc(S)$ of all the ancestors of the nodes of $S$, i.e. $Anc(S)=\bigcup_{s\in S} Anc(s)$.
The tree extracted from $S$ in $T$, denoted $T^{xtr}_S$, is defined as the tree whose set of nodes is $S$ and whose parent relationship is the transitive reduction of the ancestor relationship in $T$. More explicitly, for $u,v\in S$, $u$ is the parent of $v$ in $T^{xtr}_S$ iff $u$ is an ancestor of $v$ in $T$ and there exist no node $v'\in S$ such that $v'$ is a strict ancestor of $v$ and a strict descendant of $u$ in $T$.

\begin{figure}
  \begin{center}
  $
    \vcenter{\hbox{\includegraphics[scale=0.9]{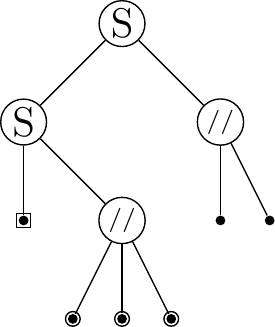}}}
    \hspace{1cm}
    \vcenter{\hbox{\includegraphics[scale=0.9]{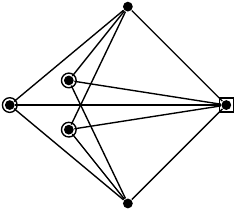}}}
    \hspace{1cm}
    \vcenter{\hbox{\includegraphics[scale=0.9]{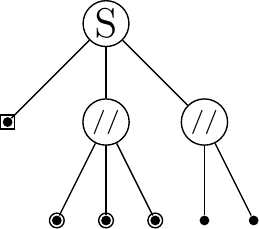}}}
  $
  \caption{Example of a labelled construction tree (left), the cograph it represents (centre), and the associated cotree (right). Some vertices are decorated in order to ease the reading.}
  \label{fig:cograph}
  \end{center}
\end{figure}

\subsubsection{Cographs.}
One of their simpler definitions is that they are the graphs that do not admit the $P_4$ (path on $4$ vertices) as an induced subgraph. This shows that the class is \emph{hereditary}, i.e., an induced subgraph of a cograph is also a cograph. Equivalently, they are the graphs obtained from a single vertex under the closure of the \emph{parallel} composition and the \emph{series} composition.
The parallel composition of two graphs $G_1=(V_1,E_1)$ and $G_2=(V_2,E_2)$ is their disjoint union, i.e., the graph $G_{par}=\big(V_1 \cup V_2, E_1 \cup E_2\big)$.
The series composition of $G_1$ and $G_2$ is their disjoint union plus all possible edges between vertices of $G_1$ and vertices of $G_2$, i.e., the graph $G_{ser}=\big(V_1 \cup V_2, E_1 \cup E_2 \cup \{xy \mid x \in V_1, y \in V_2\}\big)$. These operations can naturally be extended to an arbitrary finite number of graphs.

This gives a nice representation of a cograph $G$ by a tree whose leaves are the vertices of $G$ and whose internal nodes (non-leaf nodes) are labelled $//$, for parallel, or $S$, for series, corresponding to the operations used in the construction of $G$. It is always possible to find such a labelled tree $T$ representing $G$ such that every internal node has at least two children, no two parallel nodes are adjacent in $T$ and no two series nodes are adjacent. This tree $T$ is unique~\cite{CLS81} and is called the \emph{cotree} of $G$, see example in Fig.~\ref{fig:cograph}.
Note that the subtree $T_u$ rooted at some node $u$ of cotree $T$ also defines a cograph, denoted $G_u$, whose set of vertices is the set of leaves of $T_u$, denoted $V(u)$ in the following.
The adjacencies between vertices of a cograph can easily be read on its cotree, in the following way.

\begin{remarknum}\label{rem:adjcotree}
Two vertices $x$ and $y$ of a cograph $G$ having cotree $T$ are adjacent iff the lowest common ancestor $u$ of leaves $x$ and $y$ in $T$ is a series node. Otherwise, if $u$ is a parallel node, $x$ and $y$ are not adjacent.
\end{remarknum}


\subsubsection{The incremental approach.}
Our approach for computing a minimal cograph completion of an arbitrary graph $G$ is incremental, in the sense that we consider the vertices of $G$ one by one, in an arbitrary order $(x_1, \dots, x_n)$, and at step $i$ we compute a minimal cograph completion $H_i$ of $G_i = G[\{x_1, \dots, x_i\}]$ from a minimal cograph completion $H_{i-1}$ of $G_{i-1}$, by adding only edges incident to $x_i$. This is possible thanks to the following observation that is general to all hereditary graph classes that are also stable by addition of a universal vertex, which holds in particular for cographs.

\begin{lemma}[see e.g. \cite{OMKF81}]\label{lem:heruni}
Let $G=(V,E)$ be an arbitrary graph and let $H$ be a minimal cograph completion of $G$. Consider a new vertex $x\not\in V$ adjacent to an arbitrary subset $N(x)\subseteq V$ of vertices and denote $G'=G+x$ and $H'=H+x$ the graphs obtained by adding $x$ to $G$ and $H$ respectively. Then, there exists a subset $M\subseteq V\setminus N(x)$ of vertices such that $H''=(V,E(H')\cup\set{xy\ |\ y\in M})$ is a cograph. Moreover, for any such set $M$ which is minimal for inclusion, $H''$ is an inclusion-minimal cograph completion of $G'$. We call such completions \emph{(minimal) constrained completions} of $G+x$.
\end{lemma}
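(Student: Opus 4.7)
The plan is to split the lemma into its two assertions and handle them separately, with the second being a two-step minimality argument that leverages both the minimality of $H$ as a completion of $G$ and the minimality of $M$ by hypothesis.

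For existence, I would simply exhibit a candidate. Taking $M = V\setminus N(x)$ makes $x$ a universal vertex in $H''$; since cographs are closed under the series composition with a single vertex (i.e., under the addition of a universal vertex), and $H$ is by assumption a cograph, $H''$ is a cograph. This shows the family of admissible $M$'s is nonempty, so an inclusion-minimal one exists.

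For the minimality of $H''$, I would argue by contradiction: assume there is a cograph $H^*$ with $E(G')\subseteq E(H^*)\subsetneq E(H'')$, and derive that the corresponding $M'\subsetneq M$ contradicts the minimality of $M$. The main step is to reduce the situation to edges incident to $x$. For this I look at $H^*-x$: it is a cograph (the cograph class is hereditary), it contains $G$ (since $G\subseteq G'\subseteq H^*$), and it is a subgraph of $H''-x = H$. Minimality of $H$ as a completion of $G$ forces $H^*-x = H$, so $H^*$ and $H''$ coincide on every edge not incident to $x$. Next, since $E(G')\subseteq E(H^*)$, all edges $xy$ with $y\in N(x)$ lie in $H^*$, so the only edges of $H''$ that can be missing in $H^*$ are among $\{xy\mid y\in M\}$. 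Writing $M' = \{y\in M\mid xy\in E(H^*)\}$, we then have $M'\subseteq M$ and the set $M'$ satisfies the defining property of $M$, namely that adding $\{xy\mid y\in N(x)\cup M'\}$ to $E(H)$ yields a cograph (namely $H^*$). Inclusion-minimality of $M$ forces $M' = M$, hence $E(H^*)=E(H'')$, a contradiction.

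The only point that is not entirely routine is verifying that the reduction $H^*-x = H$ indeed applies: it requires that an induced subgraph of a cograph is a cograph (which is immediate from the $P_4$-free characterisation and is recorded in the preliminaries) together with the fact that $H'' - x = H$ by construction. Everything else is bookkeeping on the edge set, using the decomposition $E(H'') = E(H) \sqcup \{xy:y\in N(x)\} \sqcup \{xy:y\in M\}$ and tracking which of the three parts can be missing in a hypothetical smaller cograph completion.
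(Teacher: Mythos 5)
The paper does not actually prove this lemma---it is imported from the literature with a citation to~\cite{OMKF81}---so there is no in-paper argument to compare against. Your proof is correct and complete: existence via the universal-vertex choice $M=V\setminus N(x)$ (cographs are closed under series composition with a single vertex), and minimality via the key reduction $H^*-x=H$, which follows from heredity of the class together with the minimality of $H$ over $G$, after which the strict inclusion $E(H^*)\subsetneq E(H'')$ must live entirely on edges incident to $x$ and yields a set $M'\subsetneq M$ contradicting the inclusion-minimality of $M$. This is exactly the standard argument for any hereditary class closed under addition of a universal vertex, and no step is missing.
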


For any subset $S\subseteq V$ of vertices, we say that we \emph{fill} $S$ in $H''$ if we make all the vertices of $S\setminus N(x)$ adjacent to $x$ in the completion $H''$ of $G+x$. The edges added in a completion are called \emph{fill edges} and the \emph{cost} of the completion is its number of fill edges.

\subsubsection{The new problem.}
From now on, we consider the following problem, with slightly modified notations.
$G=(V,E)$ is a cograph, and $G+x$ is the graph obtained by adding to $G$ a new vertex $x$ adjacent to some arbitrary subset $N(x)$ of vertices of $G$. Both our algorithms take as input the cotree of $G$ and the neighbourhood $N(x)$ of the new vertex $x$. They compute the set $N'(x)\supseteq N(x)$ of neighbours of $x$ in some minimal constrained cograph completion $H$ of $G+x$, i.e. obtained by adding only edges incident to $x$ (cf. Lemma~\ref{lem:heruni}). Then, the cotree of $G$ is updated under the insertion of $x$ with neighbourhood $N'(x)$, in order to obtain the cotree of $H$ which will serve as input in the next incremental step.

We now introduce some definitions and characterisations we use in the following.

\begin{definition}[Full, hollow, mixed]
Let $G$ be a cograph and let $x$ be a vertex to be inserted in $G$ with neighbourhood $N(x)\subseteq V(G)$. A subset $S\subseteq V(G)$ is \emph{full} if $S\subseteq N(x)$, \emph{hollow} if $S\cap N(x)=\varnothing$ and \emph{mixed} if $S$ is neither full nor hollow. When $S$ is full or hollow, we say that $S$ is \emph{uniform}.
\end{definition}

We use these notions for nodes $u$ of the cotree as well, referring to their associated set of vertices $V(u)$. We denote $\cur{C}_{nh}(u)$ the subset of non-hollow children of a node $u$.

Theorem~\ref{th:insert} below gives a characterisation of the neighbourhood of a new vertex $x$ so that $G+x$ is a cograph. 

\begin{theorem}[\cite{CPS85,CP06}]\label{th:insert}(Cf. Fig.~\ref{fig:carac})
Let $G$ be a cograph with cotree $T$ and let $x$ be a vertex to be inserted in $G$ with neighbourhood $N(x)\subseteq V(G)$. If the root of $T$ is mixed, then $G+x$ is a cograph iff there exists a mixed node $u$ of $T$ such that:
\begin{enumerate}
\item all children of $u$ are uniform and
\item for all vertices $y\in V(G)\setminus V(u)$, $y\in N(x)$ iff $lca(y,u)$ is a series node.
\end{enumerate}
Moreover, when such a node $u$ exists, it is unique and it is called the \emph{insertion node}.
\end{theorem}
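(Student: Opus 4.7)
The plan is to prove both directions, establish existence of a candidate node $u$ by descending in $T$, verify condition~2 via a $P_4$-analysis, and close with a short uniqueness argument. For the \emph{backward direction}, assuming a mixed node $u$ with uniform children satisfying condition~2 exists, I construct a (not necessarily reduced) cotree $T'$ for $G+x$ and check all adjacencies via Remark~\ref{rem:adjcotree}. Let $F$ and $H$ be the sets of full and hollow children of $u$ respectively; both are nonempty since $u$ is mixed with uniform children. If $u$ is series, replace $u$ in $T$ by a series node whose children are the subtrees rooted at the elements of $F$ together with a new parallel node $p$ whose children are $x$ and a series node collecting the subtrees rooted at the elements of $H$ (the roles of series and parallel swap when $u$ is parallel). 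By inspection, for every $v \in V(u)$ the new $lca(x,v)$ is series iff $v$ lies in a full child, matching $v \in N(x)$; and for $y \notin V(u)$ the portion of $T$ above $u$ is unchanged, so $lca(y,x) = lca(y,u)$ in $T'$, which by condition~2 encodes $y \in N(x)$ exactly. All other pairwise lowest common ancestors are preserved, so $T'$ represents $G+x$ and $G+x$ is a cograph.

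For the \emph{forward direction}, starting at the (mixed) root of $T$ and descending to a mixed child whenever one exists yields, in finitely many steps, a mixed node $u$ all of whose children are uniform. To verify condition~2 for this $u$, pick $a \in V(u) \cap N(x)$ in some full child and $b \in V(u) \setminus N(x)$ in some hollow child; note that $ab \in E(G)$ iff $u$ is series. For any $y \notin V(u)$, set $w := lca(y,u)$: then $lca(y,a) = lca(y,b) = w$, so $y$ is adjacent to both $a$ and $b$ in $G$ iff $w$ is series. I then case-split on the labels of $u$ and $w$. In the two ``opposite label'' cases, the negation of the desired conclusion directly produces an induced $P_4$ on $\{x,a,b,y\}$, contradicting $G+x$ being a cograph. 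In the two ``same label'' cases, four vertices do not suffice (one obtains a $P_3$ plus an isolated vertex, respectively a paw), so I introduce a fifth vertex $v \in V(parent(u)) \setminus V(u)$. The key observation is that $parent(u)$ has label opposite to $u$ while $w$ has the same label as $u$, hence $parent(u) \neq w$ and $parent(u)$ is a strict descendant of $w$ lying on the $u$-side; this yields $lca(v,u) = parent(u)$ and $lca(v,y) = w$ with opposite labels, so $v$ is adjacent to $a,b$ but not to $y$, or vice versa. A short case split on whether $v \in N(x)$ then exhibits an induced $P_4$ on $\{x,a,b,y,v\}$ in both sub-cases.

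Finally, for uniqueness, any two candidates $u \neq u'$ are either comparable---in which case the descendant lies inside a uniform child of the ancestor and is therefore itself uniform, contradicting mixedness---or incomparable with $w' := lca(u,u')$, and applying condition~2 to $u$ with arbitrary $y \in V(u')$ forces every element of $V(u')$ to belong or not belong to $N(x)$ uniformly (determined by the label of $w'$), contradicting $u'$ being mixed. I expect the main obstacle to be the ``same label'' cases of the forward direction, where four vertices fail to force a $P_4$ and the auxiliary vertex $v$ inside $parent(u)$, together with the label-parity argument $parent(u) \neq w$, is essential to derive the contradiction.
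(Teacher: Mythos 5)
Your proof is correct, and it is worth noting up front that the paper itself gives no proof of this statement: Theorem~\ref{th:insert} is imported from~\cite{CPS85,CP06}, so there is no in-paper argument to compare against and what you have written is a genuinely self-contained, first-principles derivation. The backward direction's explicit construction tree (for $u$ series: a series node over the $F$-subtrees and a parallel node $p$ whose children are $x$ and a series node over the $H$-subtrees) is exactly the tree surgery the paper later performs when updating the cotree (Figure~\ref{fig:modif-tree}), and the lca bookkeeping you describe does verify every adjacency of $G+x$. In the forward direction, the descent to a deepest mixed node, the four-way case analysis on the labels of $u$ and $w=lca(y,u)$, and especially the recognition that the two same-label cases cannot be settled on four vertices (one gets a paw, resp.\ a $P_3$ plus an isolated vertex) and require a fifth vertex $v\in V(parent(u))\setminus V(u)$ --- with the alternation of labels in the cotree giving $parent(u)\neq w$, hence $parent(u)$ strictly between $u$ and $w$, hence $v$ adjacent to $a,b$ but not $y$ or vice versa --- are all sound; I checked that each of the four resulting sub-cases on whether $v\in N(x)$ does yield an induced $P_4$ as you claim. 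The uniqueness argument (comparable candidates contradict mixedness via a uniform child; incomparable ones contradict it via condition~2 applied across $lca(u,u')$) is also fine. The one point you should make explicit is the dual construction in the backward direction when $u$ is parallel: literally swapping only the node labels does not work, since it would place the full children under the top parallel node and make $x$ non-adjacent to them. The correct dualization also exchanges the roles of $F$ and $H$, i.e.\ a parallel node over the $H$-subtrees and a series node whose children are $x$ and a parallel node over the $F$-subtrees --- consistent with the paper's Figure~\ref{fig:modif-tree}, where $w_{high}=w_{nh}$ in the series case but $w_{high}=w_{h}$ in the parallel case. This is surely what you intend by ``the roles swap,'' but as written it is ambiguous.
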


\begin{remarknum}\label{rem:rootnofull}
In all the rest of the article, we do not consider the case where the new vertex $x$ is adjacent to none of the vertices of $G$ or to all of them. Therefore, the root of the cotree $T$ of $G$ is always mixed wrt. $x$.
\end{remarknum}

\begin{figure}
\begin{center}
  \includegraphics[width=0.5\textwidth]{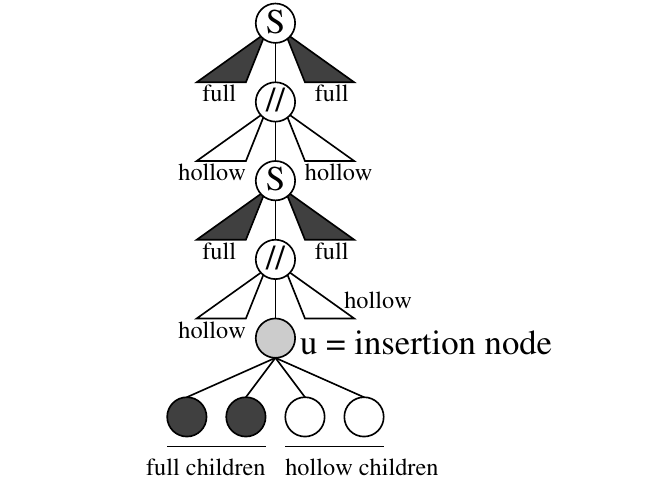}
  \caption{Illustration of Theorem~\ref{th:insert}: characterisation of the neighbourhood of a new vertex $x$ so that $G+x$ is a cograph. The nodes and triangles in black (resp. white) correspond to the parts of the tree that are full wrt. $x$ (resp. hollow wrt. $x$). The insertion node $u$, which is mixed, appears in grey colour.}
  \label{fig:carac}
\end{center}
\end{figure}

The reason for this is that the case where the root is uniform is straightforward: the only minimal completion of $G+x$ adds an empty set of edges and the update of cotree $T$ is very simple. By definition, inserting $x$ in $G$ with its neighbourhood $N'(x)$ in some constrained cograph completion $H$ of $G+x$ results in a cograph, namely $H$. Therefore, to any such completion $H$ we can associate one insertion node which is uniquely defined, from Theorem~\ref{th:insert} and from the restriction stated in Remark~\ref{rem:rootnofull}.

\begin{definition}
Let $G$ be a cograph with cotree $T$ and let $x$ be a vertex to be inserted in $G$. A node $u$ of $T$ is called a \emph{completion-minimal insertion node} iff there exists a minimal constrained completion $H$ of $G+x$ such that $u$ is the insertion node associated to $H$.
\end{definition}

From now and until the end of the article, $G$ is a cograph, $T$ is its cotree, $x$ is a vertex to be inserted in $G$ and we consider only constrained cograph completions of $G+x$. We therefore omit to systematically precise it.

\section{Characterisation of minimal constrained completions}\label{sec:charac}

The goal of this section is to give necessary and sufficient conditions for a node $u$ of $T$ to be a completion-minimal insertion node. From Theorem~\ref{th:insert}, the subtrees attached to the parallel strict ancestors of the insertion node $u$ must be hollow. As we can modify the neighbourhood of $x$ only by adding edges, it follows that if $u$ is the insertion node of some completion, then $u$ is \emph{eligible}, as defined below.

\begin{definition}[eligible]\label{def:eligible}
A node $u$ of $T$ is \emph{eligible} iff for all the strict ancestors $v$ of $u$ that are parallel nodes, all the children of $v$ distinct from its unique child $u'\in \cur{C}(v)\cap Anc(u)$ are hollow.
\end{definition}

When a node $u$ is eligible, there is a natural way to obtain a completion of the neighbourhood of $x$, which we call the completion anchored at $u$.

\begin{definition}[Completion anchored at $u$]\label{def:anchor}
Let $u$ be an eligible node of $T$. The \emph{completion anchored at $u$} is the one obtained by making $x$ adjacent to all the vertices of $V(G)\setminus V(u)$ whose lowest common ancestor with $u$ is a series node and by filling all the children of $u$ that are non-hollow.
\end{definition}

The completion anchored at some eligible node $u$ may not be minimal but, on the other hand, all minimal completions $H$ are completions anchored at some eligible node $u$, namely the insertion node of $H$.

\begin{lemma}\label{lem:uniq-compl}
For any completion-minimal insertion node $u$ of $T$, there exists a unique minimal completion $H$ of $G+x$ such that $u$ is the insertion node associated to $H$ and this unique completion is the completion anchored at $u$.
\end{lemma}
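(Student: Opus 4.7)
My plan is to prove existence and uniqueness simultaneously by taking an arbitrary minimal completion $H$ of $G+x$ whose insertion node is $u$ --- one is guaranteed by the assumption that $u$ is completion-minimal --- and showing that the neighbourhood $N'(x)$ of $x$ in $H$ is forced to be exactly the one prescribed by the completion anchored at $u$ (Definition~\ref{def:anchor}). I would split the argument into three pieces, according to where the candidate neighbours of $x$ lie: outside $V(u)$, inside a non-hollow child of $u$, or inside a hollow child of $u$.

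For the first two pieces I would apply Theorem~\ref{th:insert} to $G$ with $x$ given the completed neighbourhood $N'(x)$ and with $u$ as insertion node. Its second condition is a biconditional, so it instantly pins down $N'(x)\cap(V(G)\setminus V(u))$ as exactly the set of vertices $y$ for which $lca(y,u)$ is a series node, matching the ``outside'' part of Definition~\ref{def:anchor}. Its first condition forces every child of $u$ to be uniform wrt $N'(x)$; since fill edges can only be added, a child that is mixed wrt $N(x)$ contains some vertex of $N(x)\subseteq N'(x)$ and cannot revert to hollow, hence must be entirely contained in $N'(x)$, while a child already full wrt $N(x)$ stays full. Together, these observations match the ``fill every non-hollow child'' part of the anchored completion.

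The remaining step, which I expect to be the most delicate, is to rule out that any leaf inside a hollow child of $u$ belongs to $N'(x)$; this is the only place where minimality of $H$ is used. Suppose for contradiction that some hollow child $c$ of $u$ satisfies $V(c)\subseteq N'(x)$ (uniformity wrt $N'(x)$ forces the all-or-nothing alternative). Because $V(c)\cap N(x)=\varnothing$, every edge between $x$ and $V(c)$ is a fill edge, so I can form $H'$ by deleting all of them, obtaining a completion of $G+x$ with neighbourhood $N'(x)\setminus V(c)$. I would then re-apply Theorem~\ref{th:insert} to $G$ with $u$ as insertion node and this smaller neighbourhood: condition~2 is untouched because $V(c)\subseteq V(u)$ leaves the outside part of the neighbourhood unchanged, and condition~1 is preserved because $c$ becomes hollow in $H'$ while the other children of $u$ are unaffected and therefore remain uniform. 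Hence $H'$ is a cograph completion with strictly fewer fill edges than $H$, contradicting minimality. Combining the three parts, $N'(x)$ coincides with the neighbourhood prescribed by the completion anchored at $u$, giving both the uniqueness of $H$ and the claimed identification.
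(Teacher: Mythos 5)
Your proof is correct and follows the same skeleton as the paper's: both arguments pin down $N'(x)$ outside $V(u)$ via condition~2 of Theorem~\ref{th:insert}, and both observe that every non-hollow child of $u$ must be filled because the children of the insertion node must be uniform and edges can only be added. You diverge on the third piece. The paper never argues directly that a minimal completion cannot fill a hollow child of $u$: it defines the anchored neighbourhood $N'_{min}(x)$ outright, checks once that it satisfies Theorem~\ref{th:insert} with insertion node $u$, and deduces from the containment $H_{min}\subseteq H$ (valid for \emph{every} completion with insertion node $u$) that any minimal such $H$ must coincide with $H_{min}$; uniqueness comes for free from this minimum-element argument. You instead run an exchange argument, deleting the fill edges into a filled hollow child $c$ and re-verifying Theorem~\ref{th:insert} for the reduced neighbourhood. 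Both routes work and cost the same; the paper's avoids re-checking the theorem after a modification, while yours makes the role of minimality more transparent. One point to tighten in your version (the paper is equally terse at the corresponding spot, where it asserts that $u$ is ``also mixed with regard to $N'_{min}(x)$''): Theorem~\ref{th:insert} requires the insertion node to be \emph{mixed}, so after deleting the edges to $V(c)$ you must still exhibit a full child of $u$, which requires $u$ to have at least one non-hollow child with respect to the original $N(x)$. This does hold --- if all children of $u$ were hollow wrt $N(x)$, one could empty $V(u)$ entirely and verify Theorem~\ref{th:insert} at the lowest series strict ancestor of $u$ (which exists since $N(x)\neq\varnothing$ lies outside $V(u)$), again contradicting minimality --- but it deserves an explicit sentence in your part three.
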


\begin{proof} 
First, note that the modified neighbourhood $N'_{\bar{u}}(x)$ of $x$ in $V(G)\setminus V(u)$ is given by Theorem~\ref{th:insert} and is the same for every completion having $u$ as insertion node. Moreover, as in any such completion, the children of $u$ in $T$ are uniform, then any non-hollow child $v$ of $u$ must be filled. Then, the completion $H_{min}$ defined by the modified neighbourhood $N'_{min}(x)=N'_{\bar{u}}(x)\cup \bigcup_{v\in\cur{C}(u) \text{ and } v \text{ is non-hollow}} V(v)$ of $x$ is included in every completion having $u$ as insertion node. As there exists some minimal completion having $u$ as insertion node, then from Theorem~\ref{th:insert}, $u$ is left mixed after completion and so $u$ has some hollow child with regard to $N(x)$. Consequently, $u$ is also mixed with regard to $N'_{min}(x)$. Finally, since the insertion of $x$ with neighbourhood $N'_{min}(x)$ satisfies conditions~1 and~2 of Theorem~\ref{th:insert}, then the completion $H'_{min}$ has $u$ as insertion node. And since $H'_{min}$ is included in all such completions, it follows that $H'_{min}$ is the unique minimal completion having $u$ as insertion node.
\end{proof}


To characterise completion-minimal insertion nodes, we will use the notion of \emph{forced nodes}. Their main property (see Lemma~\ref{lem:forced} below) is that they are full in any completion of $G+x$.

\begin{definition}[Completion-forced]\label{def:compl-forced}
Let $G$ be a cograph with cotree $T$ and let $x$ be a vertex to be inserted in $G$. A \emph{completion-forced} (or simply \emph{forced}) node $u$ is inductively defined as a node satisfying at least one of the three following conditions:
\begin{enumerate}
\item $u$ is full, or
\item $u$ is a parallel node with all its children non-hollow, or
\item $u$ is a series node with all its children completion-forced.
\end{enumerate}
\end{definition}

\begin{lemma}\label{lem:forced-fort}
Let $G$ be a cograph with cotree $T$ and let $x$ be a vertex to be inserted in $G$. A node $u$ of $T$ is completion-forced iff there exists a unique cograph completion of $G_u+x$, which is the one where all missing edges between $x$ and $V(u)$ are added.
\end{lemma}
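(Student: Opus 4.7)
The plan is to prove both implications by induction on the size of the subtree $T_u$. In the base case $u$ is a leaf, so $u$ is forced iff $u$ is full: if $u$ is full then $G_u+x$ already contains the single edge $xy$ with $y=u$ and no further completion is needed; if $u$ is hollow then $G_u+x$ is edgeless and both the empty and the full completion are valid cographs, witnessing non-uniqueness. I record once and for all that the full completion (making $x$ adjacent to all of $V(u)$) always yields a cograph, namely the series composition of $\{x\}$ with $G_u$, so the only question at stake is uniqueness.

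For the forward direction ($\Rightarrow$) in the inductive step, I take any cograph completion $H$ of $G_u+x$ and aim to show $V(u)\subseteq N_H(x)$. Case~1 ($u$ full) is immediate; for Case~3 ($u$ series with all children forced) I apply the induction hypothesis to each $H[V(v_i)\cup\{x\}]$, which is itself a cograph completion of $G_{v_i}+x$, obtaining $V(v_i)\subseteq N_H(x)$ for every child $v_i$ and concluding by taking the union. The key case is Case~2 ($u$ parallel with all children non-hollow), where I apply Theorem~\ref{th:insert} to $T_u$ with modified neighbourhood $N_H(x)\cap V(u)$. Since at least one child of $u$ is non-hollow wrt.\ $N(x)\subseteq N_H(x)$, $u$ is not hollow wrt.\ $N_H(x)$, so either $u$ is full (done) or $u$ is mixed with a unique insertion node $u^*$ in $T_u$. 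If $u^*=u$, condition~1 of Theorem~\ref{th:insert} forces every child to be uniform wrt.\ $N_H(x)$, hence full (each is non-hollow), hence $u$ itself is full, contradicting that $u$ is mixed. If $u^*$ is a proper descendant of $u$ lying in the subtree of some child $v_j$, condition~2 applied to any vertex $y\in V(v_i)$ with $i\neq j$ gives $lca(y,u^*)=u$, which is parallel, so $V(v_i)\cap N_H(x)=\varnothing$, contradicting that $v_i$ is non-hollow.

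For the backward direction ($\Leftarrow$) I prove the contrapositive, again by induction: if $u$ is not forced, I exhibit a cograph completion of $G_u+x$ other than the full one. When $u$ is parallel and not forced, some child $v_i$ is hollow, and I take $H$ to be the completion that makes $x$ adjacent to exactly the vertices of the non-hollow children of $u$; Theorem~\ref{th:insert} with $u$ itself as insertion node (all children uniform wrt.\ $N_H(x)$, condition~2 vacuous) certifies that $H$ is a cograph, and $V(v_i)\cap N_H(x)=\varnothing$ shows $H$ differs from the full completion. When $u$ is series and not forced, some child $v_i$ is not forced, so the induction hypothesis provides a cograph completion $H_i$ of $G_{v_i}+x$ with $N_{H_i}(x)\cap V(v_i)\subsetneq V(v_i)$; I extend it by setting $N_H(x)=(N_{H_i}(x)\cap V(v_i))\cup\bigcup_{j\neq i}V(v_j)$ and check that either $u$ itself (when $N_{H_i}(x)\cap V(v_i)=\varnothing$) or the insertion node of $H_i$ in $T_{v_i}$ serves as an insertion node of $H$ in $T_u$, so that Theorem~\ref{th:insert} again yields a cograph, necessarily distinct from the full completion.

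The main obstacle is Case~2 of the forward direction: a direct $P_4$-hunting argument gets tangled because two vertices inside a non-hollow parallel child need not be adjacent in $G_u$, so a witness $P_4$ is not readily constructed from three arbitrary non-neighbours. Invoking Theorem~\ref{th:insert} globally on $T_u$, together with the cotree normalisation that no two parallel nodes are adjacent (so each child of a parallel $u$ is a leaf or a series node), replaces the combinatorial hunt by a short case analysis on the position of the would-be insertion node $u^*$.
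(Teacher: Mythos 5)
Your proof is correct and follows essentially the same route as the paper's: induction on $|V(u)|$, with the parallel case of the forward direction handled by locating the insertion node of Theorem~\ref{th:insert} inside $T_u$ and deriving a contradiction from the non-hollow children, and with the converse handled by the same two explicit constructions (the completion anchored at $u$ when $u$ is parallel with a hollow child, and the extension of a non-filling completion of a non-forced child when $u$ is series). The only cosmetic difference is that where you invoke conditions~1 and~2 of Theorem~\ref{th:insert} directly, the paper routes the identical reasoning through its notion of eligibility and through Lemma~\ref{lem:uniq-compl}.
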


\begin{proof}
Let us show the result by induction on $|V(u)|$.
First, consider a completion-forced node $u$ of $T$ and a completion $H$ of $G_u+x$. If $u$ satisfies Condition~3 of Definition~\ref{def:compl-forced}, then, by induction hypothesis, all its children are full in $H$ (as $H$ is also a cograph completion of $G_v+x$, for any child $v$ of $u$) and so is $u$. If $u$ satisfies Condition~1, then since $u$ is full before completion, it is also full after. Consider now the case where $u$ is completion-forced because it satisfies Condition~2 of Definition~\ref{def:compl-forced}, i.e. $u$ is parallel and all its children are non-hollow.

Assume for contradiction that $H$ does not fill $u$. Then, denote $u'$ the insertion node associated to $H$ in $T_u$. Theorem~\ref{th:insert} implies that $u'$ is eligible, and since all the children of $u$ are non hollow, it follows that $u'$ is not a strict descendant of $u$. Consequently, $u'=u$ and since all the children of $u$ are non hollow, Lemma~\ref{lem:uniq-compl} implies that $H$ fills all of them, and so $H$ fills $u$ as well: contradiction. Thus, $u$ is filled in any completion $H$ of $G_u+x$ and therefore, there exists a unique such completion.

Conversely, consider a non-completion-forced node $u$ of $T$. If $u$ is a series node, then $u$ has at least one non-completion-forced child $v$. By induction hypothesis, there exists a completion $H'$ of $G_v+x$ that does not fill $v$. Then, the completion $H$ of $G_u+x$ that coincides with $H'$ on $V(v)$ and that fills all the other children of $u$ is a cograph completion of $G_u+x$ that does not fill $u$. Now, if $u$ is a parallel node, then $u$ has at least one hollow child $v$. As $u$ is clearly eligible in $T_u$, the cograph completion $H$ anchored at $u$ is properly defined. Since $H$ leaves $v$ hollow, then $H$ does not fill $u$, which achieves the proof.
\end{proof}

\begin{lemma}\label{lem:forced}
Any completion-forced node $u$ of $T$ is filled in all the completions of $G+x$.
\end{lemma}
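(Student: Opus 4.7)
The plan is to derive Lemma~\ref{lem:forced} as an almost immediate corollary of Lemma~\ref{lem:forced-fort}, using the fact that the class of cographs is hereditary and that we restrict ourselves to constrained completions, i.e. completions that only add edges incident to $x$.

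Concretely, I would let $u$ be a completion-forced node and let $H$ be an arbitrary (constrained) cograph completion of $G+x$. I would then consider the induced subgraph $H' = H[V(u) \cup \{x\}]$. The first step is to verify that $H'$ is a cograph completion of $G_u + x$: it is a cograph because cographs are hereditary and $H$ is a cograph, and it is a completion of $G_u + x$ because on $V(u)$ the graph $H$ coincides with $G$ (we only add edges incident to $x$, never between vertices of $G$), so $H'$ differs from $G_u + x$ only in that $x$ may have additional neighbours in $V(u)$.

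Next I would apply Lemma~\ref{lem:forced-fort} to $u$: since $u$ is completion-forced, the only cograph completion of $G_u + x$ is the one in which $x$ is made adjacent to every vertex of $V(u)$. In particular $H'$ must be this completion, so in $H'$ (and therefore in $H$) the vertex $x$ is adjacent to every vertex of $V(u)$. This is exactly the statement that $u$ is filled in $H$, which proves the lemma.

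The argument is essentially a one-liner, so no real obstacle is expected; the only point that needs a small amount of care is to justify that the restriction $H'$ really is a cograph completion of $G_u + x$, which relies on two facts already available in the setup (cographs are hereditary, and the completions considered in the current problem add only edges incident to the new vertex $x$, cf. Lemma~\ref{lem:heruni} and the definition of constrained completion).
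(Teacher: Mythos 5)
Your proposal is correct and follows exactly the same route as the paper's own proof: restrict the given completion of $G+x$ to $V(u)\cup\{x\}$, observe that this restriction is a cograph completion of $G_u+x$, and invoke Lemma~\ref{lem:forced-fort} to conclude that $u$ must be filled. The only difference is that you spell out the justification (heredity of cographs and the constrained nature of the completion) that the paper leaves implicit.
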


\begin{proof} 
This is a direct consequence of Lemma~\ref{lem:forced-fort}. Indeed, any completion of $G+x$ restricted to $V(u)$ is a completion of $G_u+x$. Moreover, from Lemma~\ref{lem:forced-fort}, there exists a unique cograph completion of $G_u+x$ and this completion makes $V(u)$ full.
\end{proof}

The next remark directly follows from Theorem~\ref{th:insert} and Lemma~\ref{lem:uniq-compl}.

\begin{remarknum}\label{rem:insert-mixed}
The insertion node $u$ of any minimal completion of $G+x$ has at least one hollow child and at least one non-hollow child. Therefore, $u$ is non-hollow and non-completion-forced.
\end{remarknum}


We now characterise the nodes $u$ that contain some minimal-insertion node in their subtree $T_u$ (including $u$ itself). In our algorithms, we will use this characterisation to decide whether we have to explore the subtree of a given node.

\begin{lemma}\label{lem:explore}
For any node $u$ of $T$, $T_u$ contains some completion-minimal insertion node iff $u$ is eligible, non-hollow and non-completion-forced.
\end{lemma}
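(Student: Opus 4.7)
The plan is to treat the two directions separately, starting with the easier forward one. Assume $v\in T_u$ is a completion-minimal insertion node and let $H$ be the associated minimal completion, unique and anchored at $v$ by Lemma~\ref{lem:uniq-compl}. Remark~\ref{rem:insert-mixed} asserts that $v$ is non-hollow and has a hollow child in $H$: the first fact forces $V(u)\supseteq V(v)$ to meet $N(x)$, so $u$ is non-hollow; the second implies that $V(u)\supseteq V(v)$ is not full in $H$, and the contrapositive of Lemma~\ref{lem:forced} then yields that $u$ is non-completion-forced. Eligibility of $u$ follows from that of $v$: every strict parallel ancestor $w$ of $u$ is also a strict parallel ancestor of $v$, and the child of $w$ lying on the path to $u$ coincides with the child of $w$ lying on the path to $v$ (since $v\in T_u$), so the hollowness of the other children of $w$ required by the eligibility of $v$ transfers directly to $u$.

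For the converse direction, I argue by induction on $|V(u)|$. Being non-hollow and non-forced (hence non-full by Condition~1 of Definition~\ref{def:compl-forced}) makes $V(u)$ mixed, so $u$ is internal and the base case is vacuous. At each inductive step I either \emph{descend} to a child $w$ satisfying the three hypotheses of the lemma, or \emph{stop} at $u$. I descend in exactly two situations: (i) $u$ is series with a non-forced mixed child $w$, recursing on $w$, whose eligibility transfers from $u$ since $u$ is not a parallel ancestor of $w$; (ii) $u$ is parallel with a unique non-hollow child $w$ that is mixed and non-forced, recursing on $w$, whose eligibility follows from that of $u$ together with the fact that all other children of $u$ are hollow. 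When no descent applies, $u$ falls into one of four \emph{stopping configurations}: (A) all children of $u$ are uniform, (B1) $u$ is series with every mixed child forced, (B2a) $u$ is parallel with at least two non-hollow children, (B2b) $u$ is parallel with a unique non-hollow child that is forced. In every stopping configuration, the completion $H^*$ anchored at $u$ fills all non-hollow children of $u$ so that the children of $u$ become uniform, while $u$ retains a hollow child (ensured by $u$ being non-forced), and Theorem~\ref{th:insert} then confirms that $H^*$ is a cograph completion having $u$ as its insertion node.

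To prove that $u$ is completion-minimal in each stopping configuration, I show $H^*$ is inclusion-minimal by supposing $H'\subsetneq H^*$ is another cograph completion with insertion node $u'$ and deriving a contradiction by case analysis on the position of $u'$. The case $u'=u$ fails because the proof of Lemma~\ref{lem:uniq-compl} shows that $H^*$ is contained in every completion having $u$ as insertion node, contradicting $H'\subsetneq H^*$. When $u'$ is a strict descendant of $u$: in configuration B2a, eligibility is violated because $u$ is a parallel strict ancestor of $u'$ having several non-hollow children; in configurations A, B1 and B2b, every non-hollow child of $u$ is either full in $G$ or completion-forced, so by Lemma~\ref{lem:forced} its subtree's vertices all belong to $N'_{H'}(x)$ (forcing $u'$ to be full in $H'$, not mixed), while every hollow child's subtree stays disjoint from $N'_{H'}(x)\subseteq N'_{H^*}(x)$ (forcing $u'$ to be hollow), both contradicting Theorem~\ref{th:insert}. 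Finally, when $u'$ is a strict ancestor of $u$ or lies in a subtree of $T$ disjoint from $T_u$, applying condition~2 of Theorem~\ref{th:insert} at $lca(u',u)$ forces all vertices of $V(u)$ to share the same $N'_{H'}(x)$-status, contradicting the coexistence in $V(u)$ of a vertex of $N(x)\subseteq N'_{H'}(x)$ (witnessing $u$ non-hollow) and of a vertex of a hollow child of $u$ that cannot lie in $N'_{H'}(x)\subseteq N'_{H^*}(x)$.

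The main obstacle is the uniform minimality argument for $H^*$ across the four stopping configurations, as it combines Lemma~\ref{lem:forced} (for the forced fills inside $V(u)$), the minimum-per-insertion-node property extracted from the proof of Lemma~\ref{lem:uniq-compl} (for the case $u'=u$), and condition~2 of Theorem~\ref{th:insert} (for the cases where $u'$ lies outside $T_u$). The descent rules and the forward implication are comparatively routine applications of Theorem~\ref{th:insert}, Remark~\ref{rem:insert-mixed}, and Lemma~\ref{lem:forced}.
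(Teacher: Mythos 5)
Your proof is correct in substance, but the hard direction follows a genuinely different route from the paper's. For the ``only if'' direction you match the paper almost verbatim (Remark~\ref{rem:insert-mixed}, the contrapositive of Lemma~\ref{lem:forced}, and transfer of eligibility along the ancestor path). For the ``if'' direction, the paper avoids identifying the insertion node at all: it picks a \emph{lowest} eligible, non-hollow, non-completion-forced node $v$ in $T_u$, observes that minimality of $v$ forces every child of $v$ to be hollow or forced so that the completion anchored at $v$ leaves $v$ (hence $u$) mixed, and then concludes abstractly that any minimal completion contained in that anchored completion must have its insertion node inside $T_u$, since by Theorem~\ref{th:insert} every minimal completion whose insertion node lies outside $T_u$ leaves $u$ uniform. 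You instead run an explicit descent and prove directly that the stopping node is a completion-minimal insertion node by establishing inclusion-minimality of its anchored completion through a case analysis on the position of the competing insertion node $u'$; this is sound (your descent rules and stopping configurations are exhaustive, and your reliance on the containment claim $H_{min}\subseteq H'$ extracted from the proof of Lemma~\ref{lem:uniq-compl} is legitimate, as that claim does not presuppose completion-minimality of $u$), but it amounts to re-deriving the sufficiency half of Lemma~\ref{lem:stop}, which the paper deliberately defers. The paper's argument buys brevity and a clean separation of concerns; yours buys a constructive identification of a witness node, at the cost of duplicated work. One small inaccuracy to fix: when $u'$ is a strict ancestor of $u$, condition~2 of Theorem~\ref{th:insert} applied at $lca(u',u)=u'$ says nothing about the vertices of $V(u)\subseteq V(u')$; there you need condition~1 (the children of $u'$ are uniform after completion, hence so is $V(u)$), or equivalently the observation that only ancestors of the insertion node remain mixed. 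The conclusion you draw is nevertheless correct.
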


\begin{proof} 
If $u$ is eligible non-hollow and non-completion-forced, consider such a node $v$ of $T_u$ which is lower possible in $T_u$. If $v$ is a series node, as $v$ is eligible so are all its children. It follows that all the children of $v$ are either completion-forced or hollow. Since $v$ is non-completion-forced, at least one of its children is hollow and since $v$ is non-hollow at least one of its children is non-hollow.
The same holds if $v$ is a parallel node: since $v$ is non-completion-forced, at least one of its children is hollow and since $v$ is non-hollow at least one of its children is non-hollow. Then, in both cases, in the completion $H'$ anchored at $v$, $v$ is mixed and so is $u$. Consequently, there exists a minimal completion $H$ included in $H'$ and necessarily $u$ is mixed in $H$ as well. From Theorem~\ref{th:insert}, it is straightforward to see that all minimal completions having an insertion node out of $T_u$ leaves $u$ full or hollow. It follows that the insertion node associated to $H$ belongs to $T_u$.

Now, conversely, if there exists $v\in T_u$ which is a completion-minimal insertion node, let us denote $H$ the minimal completion anchored at $v$. From Remark~\ref{rem:insert-mixed}, $v$ is non-hollow in $G+x$, and so is $u$. Moreover, from Theorem~\ref{th:insert}, it is straightforward to see that $v$ is eligible and so is $u$. From Theorem~\ref{th:insert} again, $v$ is mixed in $H$ and so is $u$. Then, Lemma~\ref{lem:forced} implies that $u$ is non-completion-forced, which achieves the proof of the lemma.
\end{proof}

Lemma~\ref{lem:stop} below gives additional conditions for $u$ itself to be an insertion node.


\begin{lemma}\label{lem:stop}
A node $u$ of $T$ is a completion-minimal insertion node iff $u$ is eligible, non-hollow and non-completion-forced and $u$ satisfies in addition one of the two following conditions:
\begin{enumerate}
\item\label{ser} $u$ is a series node and $u$ has at least one hollow child, or
\item\label{par} $u$ is a parallel node and $u$ has no eligible non-completion-forced child.
\end{enumerate}
\end{lemma}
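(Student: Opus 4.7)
\begin{skproof}
The plan is to use Lemma~\ref{lem:explore} and Lemma~\ref{lem:uniq-compl}, together with a careful edge-by-edge comparison of anchored completions, in both directions.

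For the forward direction, I would start from the assumption that $u$ is a completion-minimal insertion node. Lemma~\ref{lem:explore} applied to $u$ itself gives that $u$ is eligible, non-hollow, and non-completion-forced, while Remark~\ref{rem:insert-mixed} guarantees that $u$ has a hollow child and a non-hollow child. When $u$ is series, this yields Condition~\ref{ser} at once. When $u$ is parallel, I would argue by contradiction: suppose $u$ admits an eligible non-completion-forced child $v$. Because $u$ is parallel and non-hollow, the eligibility of $v$ forces $v$ to be the unique non-hollow child of $u$, and applying Lemma~\ref{lem:explore} to $v$ produces a completion-minimal insertion node $w\in T_v$. I would then compare the $w$-anchored completion $H_w$ to the (unique, by Lemma~\ref{lem:uniq-compl}) minimal completion $H_u$ anchored at $u$: the two completions coincide outside $V(u)$ because $lca(y,u)=lca(y,w)$ for every $y\notin V(u)$; on $V(u)\setminus V(v)$, every such LCA with $w$ equals the parallel node $u$, so $H_w$ adds no fill edge there while $H_u$ possibly does; on $V(v)$, $H_u$ fills $V(v)$ entirely whereas Theorem~\ref{th:insert} forces the insertion node $w\in T_v$ to be mixed in $H_w$, so at least one vertex of $V(w)\subseteq V(v)$ is not adjacent to $x$ in $H_w$. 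This gives $H_w\subsetneq H_u$, contradicting the minimality of $H_u$.

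For the backward direction, assume the stated conditions on $u$. Theorem~\ref{th:insert} applied to the $u$-anchored completion $H_u$ shows that $H_u$ is a valid cograph completion with $u$ as insertion node: the crucial point is that $u$ stays mixed in $H_u$, which follows from $u$ having a hollow child under Condition~\ref{ser} (direct) and under Condition~\ref{par} (since a non-completion-forced parallel node has, by Definition~\ref{def:compl-forced}, at least one hollow child). For minimality, I would suppose to the contrary that $H_u$ properly contains some minimal completion, which by Lemma~\ref{lem:uniq-compl} is $H_w$ for some completion-minimal insertion node $w\ne u$, and then rule out every possible location of $w$. If $w$ is outside $T_u$, set $a=lca(u,w)$: when $a=w$ or $a$ is series, the $w$-anchored completion fills the whole $V(u)$ and in particular any hollow child of $u$ that $H_u$ leaves hollow, so $H_w\not\subseteq H_u$; when $a$ is parallel, the branch of $a$ containing $u$ is non-hollow and distinct from the branch containing $w$, contradicting the eligibility of $w$. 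If $w\in T_u\setminus\{u\}$, let $v$ be the child of $u$ with $w\in T_v$. Under Condition~\ref{ser}, any hollow child $v_0\ne v$ of the series node $u$ satisfies $lca(y,w)=u$ for every $y\in V(v_0)$, so $V(v_0)$ is filled in $H_w$ but not in $H_u$. Under Condition~\ref{par}, the eligibility of $w$ forces $v$ to be the unique non-hollow child of $u$ and then makes $v$ eligible; the hypothesis of Condition~\ref{par} then makes $v$ completion-forced, so by Lemma~\ref{lem:forced} the set $V(w)\subseteq V(v)$ is full in $H_w$, contradicting Theorem~\ref{th:insert} which requires the insertion node $w$ to be mixed.

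The main obstacle I anticipate is this backward case analysis: for each candidate location of $w$, one must precisely track which subset of $V(u)$ becomes adjacent to $x$ in $H_w$ compared with $H_u$, and cleanly invoke the right combination of eligibility, the definition of anchored completion, and the characterisation of completion-forced nodes.
\end{skproof}
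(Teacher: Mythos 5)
Your proposal is correct and follows essentially the same route as the paper's proof: both directions rest on Lemma~\ref{lem:explore}, Remark~\ref{rem:insert-mixed}, Lemma~\ref{lem:uniq-compl} and Lemma~\ref{lem:forced}, with sufficiency established by ruling out every possible location of a competing insertion node $w$ and necessity by the eligible non-completion-forced child argument in the parallel case. The only detail left implicit is that, in the backward Condition~\ref{ser} case, the child $v$ of $u$ containing $w$ is non-hollow (so the hollow child guaranteed by Condition~\ref{ser} is indeed some $v_0\neq v$), which follows immediately from Remark~\ref{rem:insert-mixed} applied to $w$.
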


\begin{proof} 
We first show that if the conditions of the lemma are satisfied, then $u$ is a completion-minimal insertion node.
From Lemma~\ref{lem:uniq-compl}, if $u$ is a completion-minimal insertion node, then there exists a unique minimal completion $H$ such that $u$ is the insertion node associated to this completion. From Lemma~\ref{lem:uniq-compl} again, this completion $H$ is the completion anchored at $u$, which is properly defined here as $u$ is eligible, see Definition~\ref{def:anchor}. We will now show that $H$ is minimal.

If $u$ is a parallel node, as $u$ is non-completion-forced, $u$ has at least one hollow child $v$, and the same holds if $u$ is a series node because of Condition~1. From Definition~\ref{def:anchor}, $v$ is hollow in $H$. Let $H'$ be a minimal completion of $G+x$ and let $u'$ be its insertion node $u'$. We will show that $H'$ is not strictly included in $H$. From Lemma~\ref{lem:uniq-compl}, if $u'=u$, then $H'=H$ and therefore, from now, we consider only the case where $u'\neq u$. Note that, from Theorem~\ref{th:insert}, the only nodes of $T$ that remain mixed after completion into $H'$ are the ancestors of $u'$. All the non-hollow nodes of $T$ that are not ancestors of $u'$ are filled in $H'$. Then, if $u'$ is not a descendant of $u$, node $u$ is filled in $H'$ and so is node $v$. It follows that, if $u'$ is not a descendant of $u$, $H'$ is not included in $H$.

Now, consider the case where $u'$ is a strict descendant of $u$ (remember that $u'\neq u$) and suppose for contradiction that $u$ is a parallel node. Lemma~\ref{lem:explore} implies that $u'$ is eligible. Since $u'$ is a strict descendant of $u$, then all the children of $u$, except its child $w$ that is an ancestor of $u'$, are hollow. Then, from Condition~2 of the present lemma, it follows that $w$ must be completion-forced. Lemma~\ref{lem:forced} implies that $w$, and so $u'$, is filled in $H'$. This contradicts the fact that $u'$ is the insertion node, as from Theorem~\ref{th:insert}, this node remains mixed after completion. Thus, $u$ is not a parallel node, but a series node.
From Remark~\ref{rem:insert-mixed}, $u'$ is non-hollow in $G+x$ and consequently, $u'$ is not a descendant of $v$ (the hollow child of $u$). Since $u$ is a series node, it follows that $v$ is filled in $H'$, which is therefore not included in $H$. This achieves the proof that the conditions of the lemma are sufficient.

Let us now show that they are necessary. Consider a completion-minimal node $u$ and let us show that it satisfies the conditions of the lemma. Firstly, because $T_u$ contains some completion-minimal insertion node, namely $u$, Lemma~\ref{lem:explore} implies that $u$ is mixed, eligible and non-completion-forced. Let $H$ be the completion anchored at $u$. From Theorem~\ref{th:insert}, $u$ is mixed in $H$. Then, from Lemma~\ref{lem:uniq-compl}, it follows that $u$ has at least one hollow child. Condition~1 is satisfied.

We now show that if $u$ is parallel and does not satisfy Condition~2, then the completion $H$ anchored at $u$ is not minimal, which implies that $u$ is not a completion-minimal insertion node. Since $u$ is mixed, it has at least one non-hollow child $v$. Moreover, since $u$ does not satisfy Condition~2, $v$ is the unique non-hollow child of $u$ (then $v$ is eligible) and $v$ is non-completion-forced. As $v$ is eligible, non-hollow and non-completion-forced, it follows from Lemma~\ref{lem:explore} that $T_v$ contains some completion-minimal insertion node. The corresponding minimal completion $H'$ is included in $H$ and even strictly included as $H'$ leaves $v$ mixed, while $H$ fills it (since $v$ is not hollow). Thus, $H$ is not minimal. By contraposition, if $H$ is minimal, Condition~2 is satisfied. This achieves the proof of the lemma.
\end{proof}

\section{An $\mathbf{O(n+m')}$ algorithm with incremental minimum}\label{sec:algo-compl}

In this section
, we design an incremental algorithm whose overall time complexity is $O(n+m')$, where $m'$ is the number of edges in the output completed cograph. We concentrate on one incremental step, whose input is the cotree $T$ of some cograph $G$ (the completion computed so far) and a new vertex $x$ together with the list of its neighbours $N(x)\subseteq V(G)$. Each node $u\in T$ stores its number $|\cur{C}(u)|$ of children and the number $|V(u)|$ of leaves in $T_u$. One incremental step takes time $O(d')$, where $d'$ is the degree of $x$ in the completion of $G+x$ computed by the algorithm. Within this complexity, our algorithm scans all the minimal completions of the neighbourhood of $x$ and select one of minimum cardinality. Our description is in two steps.

\subsubsection{First step: collecting information on nodes of $\mathbf{T}$.}

In this step, for each non-hollow node $u$ of $T$ we determine the following information: i) the list of its non-hollow children $\cur{C}_{nh}(u)$, ii) the number of neighbours of $x$ in $V(u)$ and iii) whether it is completion forced or not. To this purpose, we perform two bottom-up searches of $T$ from the leaves of $T$ that are in $N(x)$ up until the root of $T$. Consequently, each of these searches discovers exactly the set $\cur{NH}(T)$ of non-hollow nodes of $T$ (for which we show later that their number is $O(d')$).

In the first search, we label each node encountered as non-hollow, we build the list of its non-hollow children and count them. The nodes that are not visited, and therefore not labelled are exactly the hollow nodes of $T$.

In the second search, for each non-hollow node $u$ we determine the rest of its information, that is ii) the number of neighbours of $x$ in $V(u)$ and iii) whether it is completion forced or not. 

It is straightforward to get this information for the leaves $l$ of $T$ that belong to $N(x)$: there is exactly one neighbour of $x$ in $V(l)$ and $l$ is forced. Then, all the leaves in $N(x)$ forward their information to their parents in an asynchronous way. Along this process, each non-hollow node $u$ of $T$ is able to know whether it has received the information from all its non-hollow children, as we determined their number in the first search. When it happens, when $u$ has received the information from all its non-hollow children, $u$ is able to determine its own information: $u$ makes the sum of $|V(v)\cap N(x)|$ for all its non-hollow children $v$, and $u$ determines whether it is completion-forced as follows. If $u$ is parallel, then $u$ is completion-forced iff all its children are non-hollow, and if $u$ is series, then $u$ is completion-forced iff all its children are completion-forced. Then, $u$ forwards its information to its parent and the process goes on until the root of the tree itself has determined its information. At that time, the process ends as all the non-hollow nodes of $T$ have already determined their information.

\subsubsection{Second step: finding all completion-minimal insertion nodes of $\mathbf{T}$.}

We search the set of all non-hollow, eligible and non-completion-forced nodes of $T$. For each of them, we determine whether it is a minimal insertion node and, in the positive, we compute the number of edges to be added in its associated minimal completion. Then, at the end of the search we select the completion of minimum cardinality.

Since, all the ancestors of a non-hollow eligible non-completion-forced node also satisfy these three properties, it follows that the part of $T$ we have to search is a connected subset of nodes containing the root. Then, our search starts by determining whether the root is non-completion-forced. In the negative, we are done: there exists one unique minimal completion of $G+x$ which is obtained by adding all missing edges between $x$ and the vertices of $G$.

Otherwise, if the root is non-completion-forced (it is always eligible, by definition, and non-hollow, from Remark~\ref{rem:rootnofull}), we start our search. For all the non-hollow children of the current node (we built their list in the first step), we check whether they are eligible and non-completion-forced and search, in a depth-first manner, the subtrees of those for which the test is positive (cf. Lemma~\ref{lem:explore}).

During this depth-first search, we compute for each node $u$ encountered the number of edges, denoted $cost-above(u)$, to be added between $x$ and the vertices of $V(G)\setminus V(u)$ in the completion anchored at $u$. This can be computed during the search as follows:
\begin{itemize}
\item if the parent $v$ of $u$ is a parallel node (necessarily eligible, since we parse only eligible nodes), then $cost-above(u)=cost-above(v)$; and
\item if the parent $v$ of $u$ is a series node, then $cost-above(u)=cost-above(v)+\sum_{u'\in\cur{C}(v), u'\neq u} |V(u')\setminus N(x)|$.
\end{itemize}

We also determine whether $u$ is a minimal insertion node by testing whether it satisfies Condition~\ref{ser} or~\ref{par} of Lemma~\ref{lem:stop}. This can be done thanks to the information collected in the first step. Importantly for the complexity, note that Condition~\ref{par} of Lemma~\ref{lem:stop} can be tested by scanning only the non-hollow children of $u$.
In the positive, if $u$ is a minimal insertion node, then we determine the number of edges, denoted $cost(u)$, to be added in the completion anchored at $u$, as $cost(u)=cost-above(u)+\sum_{v\in\cur{C}_{nh}(u)}|V(v)\setminus N(x)|$.

From Lemma~\ref{lem:stop}, minimal insertion nodes are non-hollow, eligible and non-completion-forced. Therefore, our search discovers all the completion-minimal insertion nodes, and computes the cost of their associated minimal completion. We keep track of the minimum cost completion encountered during the search and outputs the corresponding insertion node at the end.
Finally, we need to update the cotree $T$ for the next incremental step of the algorithm (as depicted in Figure~\ref{fig:modif-tree}). To this purpose, we use the algorithm of~\cite{CPS85} as explained below. 

\subsubsection{Complexity.}

The key of the $O(d')$ time complexity is that we search and manipulate only the set $\cur{NH}(T)$ of non-hollow nodes of $T$. For each of them $u$, we need to scan the list of its non-hollow children $\cur{C}_{nh}(u)$ and to perform a constant number of tests and operations that can all be done in $O(1)$ time (thanks to the information collected in the first step). For example, when we need to test the number of hollow children of $u$ we avoid to count them by computing their number as $|\cur{C}(u)|-|\cur{C}_{nh}(u)|$. The computation of $cost-above(u)$ can also be done in $O(1)$ time by noting that the sum $\sum_{u'\in\cur{C}(v), u'\neq u} |V(u')\setminus N(x)|$ can rather be computed as $|V(v)\setminus N(x)|-|V(u)\setminus N(x)|$. Therefore, treating a node $u$ takes time $O(|\cur{C}_{nh}(u)|$ and the execution of the two steps of the algorithm takes $O(|\cur{NH}(T)|)$ time.

Furthermore, as shown in~\cite{LMP10}, we have $|\cur{NH}(T)|=O(d')$, where $d'=|N'(x)|$ is the cardinality of the completed neighbourhood of $x$. Indeed, from Theorem~\ref{th:insert}, all non-hollow nodes are filled except the ancestors of the insertion node $u$. Let $v$ be a non-hollow child of one ancestor of $u$, then $v$ is filled and it follows that the sum of the sizes of $T_v$ for all such $v$ is bounded by $N'(x)$. The number of ancestors of $v$ is also linearly bounded by $N'(x)$ as half of these ancestors are series and therefore have a child $v$ which is filled.

When, the insertion node $u$ has been determined, the completed neighbourhood $N'(x)$ of $x$ can be computed in extension by a search of the part of $T$ that is filled, which takes $O(d')$ time. Then, the cotree of the completion $H$ of $G+x$ is obtained from the cotree of $G$ (as depicted in Figure~\ref{fig:modif-tree}) in the same time complexity thanks to the algorithm of~\cite{CPS85}. Overall, one incremental step takes $O(d')$ time and the whole running time of the algorithm is $O(n+m')$, where $m'$ is the number of edges in the output cograph.

\section{An $\mathbf{O(n+m\, log^2 n)}$ algorithm}\label{sec:improv-algo}

Even though it is linear in the number of edges in the output cograph, the $O(n+m')$ complexity achieved by the algorithm in~\cite{LMP10} and the one we presented in Section~\ref{sec:algo-compl} is not necessarily optimal, as the output cograph can actually be represented in $O(n)$ space using its cotree. We then design a refined version of the inclusion-minimal completion algorithm that runs in $O(n+m\log^2 n)$ time, when no additional condition is required on the completion output at each incremental step. This improvement is further motivated by the fact that, as we show below, there exist graphs having only $O(n)$ edges and which require $\Omega(n^2)$ edges in any of their cograph completions. For such graphs, the new complexity we achieve also writes $O(n\log^2 n)$ (since $m=O(n)$) and constitutes a significant improvement over the $O(n^2)$ complexity of the previous algorithm (since $m'=\Omega(n^2)$).

\subsection{Worst-case minimum-cardinality completion of very sparse graphs}

In this section, we show that there exist graphs that have only $O(n)$ edges and that require $\Omega(n^2)$ edges in any of their cograph completions. Actually, we show that this even holds in the more general case where the target graph class has bounded rank-width (see~\cite{OS06} for a definition), which includes the class of cographs as well as the class of distance hereditary graphs (see~\cite{S03} for a definition). Furthermore, although it is not necessary for the purpose of this article, we also show that the same behaviour occurs for chordal completion, as we believe that this fact is interesting in itself.
Our proofs are based on the notion of vertex expander graphs (see~\cite{hoory2006expander} for a survey on the topic). We first show that these graphs require $\Omega(n^2)$ edges in any of their cograph completions, as stated by Theorem~\ref{th:compl-brw} below, and we conclude by pointing out that there exist constructions of vertex expander graphs with only $O(n)$ edges.

\begin{definition}[Vertex expander]
A graph $G$ is a \emph{$c$-expander} if, for every vertex subset $S \subseteq V(G)$ with $|S| \leq \frac{|V(G)|}{2}$ we have $|N(S)| \geq c \cdot |S|$.
\end{definition}

In our proof of Theorem~\ref{th:compl-brw}, we will use the fact that cographs are graphs of bounded rank-width, for which we have Proposition~\ref{prop:balsep} below. Roughly speaking, it states that if a graph $G$ has rank-width at most $r$, then there exists a cut of $G$ of rank at most $r$ such that both parts of the cut are large. 

\begin{proposition}[\cite{OumS07}]\label{prop:balsep} Let $r$ be an integer and let $G$ be a graph whose rank-width is at most $r$. Then there exists a subset $S \subseteq V(G)$ of vertices, such that $\frac{n}{3} \leq |S| \leq \frac{n}{2}$ and $\cutrank(S) \leq r$.
\end{proposition}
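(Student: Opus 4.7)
The plan is to derive the balanced separator directly from the definition of rank-width, reducing the statement to a purely combinatorial lemma about subcubic trees. Recall that a rank-decomposition of $G$ is a pair $(T, \mathcal{L})$, where $T$ is a subcubic tree (every non-leaf node has degree exactly $3$) and $\mathcal{L}\colon V(G)\to \{\text{leaves of }T\}$ is a bijection. Removing any edge $e$ of $T$ partitions the leaves of $T$, and via $\mathcal{L}$ induces a partition of $V(G)$ into two parts $(S_e, V(G)\setminus S_e)$. The \emph{width} of $e$ is $\cutrank(S_e)$, and the rank-width of $G$ is the minimum over all rank-decompositions of the maximum width of an edge. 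By hypothesis there is therefore a rank-decomposition $(T,\mathcal{L})$ of $G$ in which every edge has cut-rank at most $r$.

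Given such a decomposition, it suffices to exhibit an edge $e$ of $T$ for which both sides of the induced bipartition of $V(G)$ contain at least $\lceil n/3\rceil$ vertices: taking $S$ to be the smaller of the two parts, one then obtains $\frac{n}{3}\leq |S|\leq \frac{n}{2}$ together with $\cutrank(S)\leq r$ for free. So the remaining task is the following statement about subcubic trees with $n$ leaves: there exists an edge of $T$ whose removal leaves both components with a number of leaves in $[n/3,\, 2n/3]$.

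To find such an edge I would use a centroid argument. Orient each edge $e$ of $T$ from the side with fewer leaves toward the side with more leaves (if both sides have exactly $n/2$ leaves we are immediately done). Following this orientation from any starting vertex, one never revisits a vertex since $T$ is acyclic, so the walk terminates at a sink $v$, i.e.\ a node all of whose incident edges point inwards. For each edge $e$ incident to $v$, the component of $T-e$ not containing $v$ holds at most $n/2$ leaves; call these numbers $a_1,\dots,a_d$. Assuming $n\geq 3$ one checks that $v$ must be internal, so $d=3$ and $a_1+a_2+a_3=n$. Since each $a_i\leq n/2$, the largest $a_i$ is forced to satisfy $a_i\geq n/3$. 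The corresponding edge yields the balanced cut.

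The main obstacle is just this last bookkeeping: justifying that the sink is an internal node (in particular that the walk along the orientation terminates at a node of degree three, not at a leaf, whenever $n\geq 3$), and extracting the clean bound $n/3\leq a_i\leq n/2$ from the sum-and-max inequality. Both are elementary, so the substance of the argument really lies in translating ``rank-width at most $r$'' into the existence of a rank-decomposition satisfying the cut-rank bound on \emph{every} edge, and then using only the tree structure of the decomposition to find a balanced edge.
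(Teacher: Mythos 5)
Your argument is correct. Note that the paper does not actually prove this proposition: it is imported verbatim from Oum and Seymour, with only the remark that they state it for general symmetric submodular functions. What you have written is a correct, self-contained derivation of the special case needed here, and it is essentially the standard argument underlying the cited result: unfold the definition of rank-width to get a subcubic decomposition tree in which \emph{every} edge has cut-rank at most $r$, then find a balanced edge by the centroid/orientation trick. All the steps check out: orienting each edge toward the side with more leaves (handling the $n/2$--$n/2$ tie separately) yields a walk that never backtracks, hence a simple path in a finite tree, hence a sink $v$; for $n\geq 3$ a leaf cannot be a sink since its unique incident edge points away from it; and with $a_1+a_2+a_3=n$ and each $a_i\leq n/2$, the maximum $a_i$ lies in $[n/3,n/2]$, giving the desired $S$. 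The only thing your write-up buys beyond the paper is self-containedness (and a slight loss of generality relative to the submodular-function version, which is irrelevant here); the only degenerate point worth a footnote is that the statement is vacuous or meaningless for $n\leq 1$, where no decomposition tree with an edge exists, but that is a defect of the proposition as stated rather than of your proof.
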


We remark that Proposition~\ref{prop:balsep} is stated by Oum and Seymour~\cite{OumS07} in terms of symmetric submodular functions. Also see~\cite{OS06} for definitions of \emph{rank-width} and \emph{cutrank}. We will need the following proposition which shows that if a cut $(S,V\setminus S)$ of a graph has a small rank, say $r$, then there can be only a small number\footnote{More explicitly, this number is bounded by a quantity depending only on $r$.} of equivalence classes of vertices in $S$ according to their neighbourhood in $V\setminus S$.

\begin{proposition}[\cite{vatshelle2012new}]\label{prop:neighequiv} Let $G$ be a graph and $S \subseteq V(G)$ be a vertex set such that $\cutrank(S) \leq r$. Then there exists a partition $S = S_1 \cup S_2 \cup \ldots \cup S_t$, $t \leq 2^r$ such that for every $i \leq t$ and pair $u,v$ of vertices in $S_i$, $N(u) \setminus S = N(v) \setminus S$.
\end{proposition}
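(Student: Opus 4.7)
The plan is to unpack the definition of \cutrank\ and observe that the required partition is essentially just the equivalence-class partition of the rows of the cut matrix. Recall that $\cutrank(S)$ is, by definition, the rank over $\mathbb{F}_2$ of the $|S| \times |V\setminus S|$ adjacency matrix $M_S$ whose entry $M_S[u,w]$ (with $u\in S$, $w\in V\setminus S$) equals $1$ iff $uw\in E(G)$, and $0$ otherwise. For each $u\in S$, let $\mathbf{r}_u\in \mathbb{F}_2^{V\setminus S}$ denote the corresponding row of $M_S$; by construction $\mathbf{r}_u$ is exactly the characteristic vector of $N(u)\setminus S$ inside $V\setminus S$. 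Consequently $N(u)\setminus S=N(v)\setminus S$ holds if and only if $\mathbf{r}_u=\mathbf{r}_v$.

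With this observation in hand, I would introduce the equivalence relation $u\sim v \iff \mathbf{r}_u=\mathbf{r}_v$ on $S$ and let $S_1,\ldots,S_t$ be its equivalence classes. By the equivalence stated above, each class $S_i$ satisfies the required property that all its vertices share the same external neighbourhood in $V\setminus S$, so it only remains to bound $t$.

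The bound on $t$ follows from a standard linear-algebra fact: every row $\mathbf{r}_u$ lies in the row space of $M_S$, which is a subspace of $\mathbb{F}_2^{V\setminus S}$ of dimension at most $r$, hence contains at most $2^{r}$ vectors. Since the classes $S_i$ are in bijection with the set of distinct rows $\{\mathbf{r}_u : u\in S\}$, we obtain $t\le 2^r$, completing the argument.

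There is essentially no obstacle here: once one translates ``same external neighbourhood'' into ``same row of the cut matrix'', the statement reduces to the elementary observation that a rank-$r$ $\mathbb{F}_2$-matrix has at most $2^r$ distinct rows. The only thing to be slightly careful about is whether \cutrank\ is defined over $\mathbb{F}_2$ (as is standard in the rank-width literature of \cite{OS06}); if a reader works with a different field the same proof goes through, but the quantitative bound $2^r$ uses $|\mathbb{F}_2|=2$ explicitly.
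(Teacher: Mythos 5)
Your proof is correct and complete. The paper itself gives no proof of this proposition --- it is imported verbatim from the cited reference --- so there is nothing to diverge from; your argument (rows of the $\mathbb{F}_2$ cut matrix are the characteristic vectors of the external neighbourhoods, a rank-$r$ matrix over $\mathbb{F}_2$ has at most $2^r$ distinct rows since each row lies in the row space, partition $S$ by equal rows) is exactly the standard one, and your closing caveat about the field being $\mathbb{F}_2$ is the right thing to flag, since that is indeed the convention for cutrank in the rank-width literature the paper relies on.
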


We are now ready to state and prove Theorem~\ref{th:compl-brw}, regarding completions in graph classes $\cur{H}$ of bounded rank-width.

\begin{theorem}\label{th:compl-brw}
Let $c > 0$ be a positive real number and $r$ be a positive integer. Let also $G$ be a $c$-expander and $\cur{H}$ be a class of graphs whose rank-width is at most $r$. Then, there exists a positive real number $K_{c,r}$, depending only on $c$ and $r$, such that any completion of $G$ into a graph in $\cur{H}$ has at least $K_{c,r} \cdot n^2$ edges.
\end{theorem}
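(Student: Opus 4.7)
The plan is to fix an arbitrary completion $H$ of $G$ in $\cur{H}$ and exhibit a cut of $V$ across which $H$ must contain $\Omega(n^2)$ edges; since $V(H)=V(G)$, this lower-bounds $|E(H)|$. The strategy combines the rigid ``block'' structure that bounded rank-width imposes on cuts of $H$ (Propositions~\ref{prop:balsep} and~\ref{prop:neighequiv}) with the vertex expansion of $G$.

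First I would apply Proposition~\ref{prop:balsep} to $H$ to obtain a subset $S\subseteq V$ with $n/3\leq |S|\leq n/2$ and $\cutrank(S)\leq r$ in $H$, and set $T:=V\setminus S$. Proposition~\ref{prop:neighequiv} then provides a partition $S=S_1\cup\dots\cup S_t$ with $t\leq 2^r$ such that all vertices of a common class $S_i$ share the same $H$-neighbourhood in $T$. Write $s_i=|S_i|$. This means the bipartite structure of $H$ between $S$ and $T$ is a ``blow-up'' on at most $t\leq 2^r$ blocks on the $S$-side, which is the rigidity I will exploit.

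Next I would restrict attention to the union $\tilde S$ of the classes whose size exceeds a threshold $\theta$, to be tuned. The discarded classes contribute at most $t\theta$ vertices, so $|S\setminus\tilde S|< t\theta$ and $|\tilde S|>|S|-t\theta$. Since $\tilde S\subseteq S$ has size at most $n/2$, the $c$-expansion of $G$ applies to $\tilde S$: $|N_G(\tilde S)|\geq c|\tilde S|$. At most $|S\setminus\tilde S|<t\theta$ of these neighbours lie in $S$, so $|N_G(\tilde S)\cap T|\geq c|\tilde S|-|S\setminus\tilde S|>c|S|-(c+1)t\theta$. Choosing $\theta=\alpha|S|/t$ with $\alpha=c/(2(c+1))$ makes both $|\tilde S|$ and $|N_G(\tilde S)\cap T|$ of order $|S|$, concretely $|N_G(\tilde S)\cap T|\geq c|S|/2$, while still keeping $\theta=\Theta(|S|/t)$.

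Finally, I would cash in the structure of $H$: each $v\in N_G(\tilde S)\cap T$ witnesses a $G$-edge $uv$ with $u\in S_i$ for some $i$ with $s_i\geq\theta$; since $E(G)\subseteq E(H)$, by Proposition~\ref{prop:neighequiv} every vertex of $S_i$ is $H$-adjacent to $v$, so $v$ contributes at least $s_i\geq\theta$ edges to $e_H(S,T)$. Summing yields $|E(H)|\geq e_H(S,T)\geq (c|S|/2)\cdot \theta = \Omega(|S|^2/t)=\Omega(n^2/2^r)$, which gives the desired bound with $K_{c,r}$ of order $c^2/((c+1)\,2^r)$. The main delicate point is the balancing of $\theta$: making it too large deflates $\tilde S$ below the threshold needed to push expansion into $T$, while making it too small weakens the per-vertex contribution of each $v$; the constraint $\alpha<c/(c+1)$ is exactly what aligns both effects and explains why $K_{c,r}$ must depend jointly on $c$ and $r$.
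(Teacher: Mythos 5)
Your proof is correct and follows essentially the same route as the paper: a balanced cut of low cutrank from Proposition~\ref{prop:balsep}, the bounded number of neighbourhood classes from Proposition~\ref{prop:neighequiv}, and the vertex expansion of $G$ to produce $\Omega(|S|)$ vertices outside $S$ each of which must be $H$-adjacent to an entire class of size $\Omega(|S|/2^r)$. The only difference is bookkeeping: you discard all classes below a single threshold $\theta=\Theta(|S|/t)$ at once and apply expansion to the union of the surviving large classes, whereas the paper orders the classes by increasing size and runs a two-case analysis on the largest prefix $U_i$ satisfying $|U_i|\leq\frac{c}{2}|S\setminus U_i|$; your thresholding is arguably the cleaner execution of the same idea and yields a constant of the same quality.
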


\begin{proof}
Let $H$ be a completion of $G$ into a graph in $\cur{H}$. Since $H$ is a supergraph of $G$, it follows immediately from the definition that $H$ is a $c$-expander. Moreover, since $H$ has rank-width at most $r$, from Propositions~\ref{prop:balsep} and~\ref{prop:neighequiv}, there exists a subset $S \subseteq V(G)$ of vertices, such that $\frac{n}{3} \leq |S| \leq \frac{n}{2}$ and there exists a partition $S = S_1 \cup S_2 \cup \ldots \cup S_t$, with $t \leq 2^r$, such that for every $i \leq t$ and any pair $u,v$ of vertices in $S_i$, $N(u) \setminus S = N(v) \setminus S$. Assume, without loss of generality, that the $S_i$'s are ordered by increasing cardinality. We denote $U_i=\bigcup_{j\in\int{1,i}} S_i$.

If $|S_1|> \frac{c}{2}\, |S\setminus S_1|$, then we have $|S_1|> \frac{c}{2}\, |S|- \frac{c}{2}\, |S_1|$ and so $(1+\frac{c}{2})\, |S_1|> \frac{c}{2}\, |S|$, which gives $|S_1|> \frac{c}{2+c}\, |S|$. And since the $S_i$'s are ordered by increasing size, we conclude that the inequality holds for all indices: for all $i\in\int{1,t}$, we have $|S_i|> \frac{c}{2+c}\, |S|$.

In the complement case, i.e. if $|S_1|\leq \frac{c}{2}\, |S\setminus S_1|$, then consider the largest index $i$ such that $|U_i|\leq \frac{c}{2}\, |S\setminus U_i|$. Note that necessarily we have $1\leq i<t$. We now prove that $|S_{i+1}|=\Omega(|S|)$, where the hidden factor depends only on $c$ and $r$. By definition of $i$, we have $|U_i|+|S_{i+1}|> \frac{c}{2}\, (|S|-|U_i|-|S_{i+1}|)=\frac{c}{2}\, |S|-\frac{c}{2}\, (|U_i|+|S_{i+1}|)$. This gives $(1+\frac{c}{2}\,) (|U_i|+|S_{i+1}|)> \frac{c}{2}\, |S|$. On the other hand, because the $S_i$'s are ordered by increasing cardinality, we have that $|U_i|\leq i\, |S_{i+1}|\leq 2^r\, |S_{i+1}|$. By injecting this inequality in the one above we obtain $(1+\frac{c}{2})\, (1+2^r)\, |S_{i+1}|>\frac{c}{2}\, |S|$ and so $|S_{i+1}|> \frac{c}{(2+c)\, (1+2^r)}\, |S|$.

As a partial conclusion, we have either \emph{(i)} for all $i\in\int{1,t}$, $|S_i|> \frac{c}{2+c}\, |S|$, or \emph{(ii)} there exists $i\in\int{1,t-1}$ such that $|U_i|\leq \frac{c}{2}\, |S\setminus U_i|$ and for all $j\in\int{i+1,t}$, we have $|S_j|> \frac{c}{(2+c)\, (1+2^r)}\, |S|$ (because the $S_i$'s are ordered by increasing cardinality). Beside this, because of the expansion property of $S$, we have $|N(S)| \geq c\, |S|$, meaning that there are at least $c \cdot |S|$ vertices out of $S$ that are adjacent to at least one vertex of $S$. Moreover, note that from the definition of the $S_i$'s, we have that if a vertex $x\in V\setminus S$ is adjacent to some vertex $y\in S_i$, for some $i\in\int{1,t}$, then $x$ is adjacent to all the vertices of $S_i$. In case \emph{(i)} of the alternative above, where $|S_i|> \frac{c}{2+c}\, |S|$ for all $i\in\int{1,t}$, we obtain that there must be at least $c\, |S| \cdot \frac{c}{2+c}\, |S|=\frac{c^2}{2+c}\, |S|^2$ edges between $S$ and $V\setminus S$ in graph $H$. Thus, in this case, because $|S|\geq \frac{n}{3}$, the conclusion of the theorem holds.

In the other case, i.e. case \emph{(ii)} of the alternative above, we have $|U_i|\leq \frac{c}{2}\, |S\setminus U_i|$ for some $i\in\int{1,t-1}$ and for all $j\in\int{i+1,t}$, $|S_j|> \frac{c}{(2+c)\, (1+2^r)}\, |S|$. The expansion property applied to $S\setminus U_i$ gives $|N(S\setminus U_i)|\geq c \, |S\setminus U_i|$. Since $|U_i|\leq \frac{c}{2}\, |S\setminus U_i|$, we have $|N(S\setminus U_i)\setminus S| \geq \frac{c}{2}\, |S\setminus U_i|$. Observe that because $S_{i+1}\subseteq |S\setminus U_i|$, we have $|S\setminus U_i|\geq |S_{i+1}|\geq \frac{c}{(2+c)\, (1+2^r)}\, |S|$ and consequently $|N(S\setminus U_i)\setminus S| \geq \frac{c^2}{2\, (2+c)\, (1+2^r)}\, |S|$. Moreover, each of the vertices in $N(S\setminus U_i)\setminus S$ is adjacent to all the vertices of $S_j$ for some $j\in\int{i+1,t}$. And since $|S_j|> \frac{c}{(2+c)\, (1+2^r)}\, |S|$, we obtain that there are at least $\frac{c^2}{2\, (2+c)\, (1+2^r)}\, |S| \cdot \frac{c}{(2+c)\, (1+2^r)}\, |S|=\frac{c^3}{2\, (2+c)^2\, (1+2^r)^2}\, |S|^2\geq \frac{c^3}{18\, (2+c)^2\, (1+2^r)^2}\, n^2$ edges between $S\setminus U_i$ and $N(S\setminus U_i)\setminus S$ in graph $H$ (because $|S|\geq \frac{n}{3}$), which achieves the proof of the theorem.
\end{proof}

\begin{remark}
The result of Theorem~\ref{th:compl-brw} holds in particular for cographs and distance hereditary graphs, which both have rank-width at most $1$.
\end{remark}

It is also worth noting that in the particular cases of cographs and distance hereditary graphs, the proof above can be greatly simplified as follows. For a cut $(S,V\setminus S)$ of rank at most $1$, all the vertices of $S$ having some neighbour in $V\setminus S$ have exactly the same neighbours in $V\setminus S$. This corresponds to the fact that there are at most $2$ equivalence classes $S_1,S_2$ in Proposition~\ref{prop:neighequiv} ($r=1$): the vertices of $S$ that have some neighbour in $V\setminus S$ and those that do not have any. Moreover, the expansion property for $S$ and for $V\setminus S$ (remind that from Proposition~\ref{prop:balsep} we have $\frac{n}{3} \leq |S| \leq \frac{n}{2}$) implies that the numbers of vertices in $S$ and in $V\setminus S$ that have some neighbour on the other side of the cut are both $\Omega(c.n)$, which proves the statement of Theorem~\ref{th:compl-brw}.

The results above hold for any input graph that is a $c$-expander. Nevertheless, in order to achieve our goal, we still need the existence of very sparse $c$-expanders. This has already been established as there exist deterministic constructions of very sparse graphs that are $c$-expanders, see for example the construction of $3$-regular $c$-expanders by Alon and Boppana~\cite{Alo86}, for some fixed $c$. Such graphs have only $O(n)$ edges but, from Theorem~\ref{th:compl-brw}, require $\Omega(n^2)$ edges in any of their cograph completions (as well as in any of their completions in a graph class $\cur{H}$ of bounded rank-width). More generally, it is part of the folklore that, for any constant $a > 1$, there exist $c>0$ and $p>0$ such that, for any $n\in \mathbb{N}$ sufficiently large, the proportion of graphs on $n$ vertices and $a.n$ edges that are $c$-expanders is at least $p$. This means that many graphs of fixed mean degree have the vertex expansion property and therefore require $\Omega(n^2)$ edges in any of their cograph completions. Motivated by this frequent worst-case for the $O(n+m')$ complexity, we will design an $O(n+m\log^2 n)$-time algorithm for inclusion-minimal cograph completion of arbitrary graphs.

\subsubsection{A similar behaviour for chordal completion.}

The fact that some very sparse graphs, having $O(n)$ edges, may require $\Omega(n^2)$ edges in any of their completions also occurs for other target classes, whose rank-width is unbounded. In particular, we now show that the very popular chordal completion problem also exhibits such a behaviour, which we believe is worth of interest in itself, though unnecessary for the strict purpose of this article. Our proof is as previously based on vertex expander graphs, for which we have the following result.

\begin{proposition}[\cite{FlumG06}]\label{prop:bigtw} If $G$ is a $c$-expander for a constant $c > 0$ independent of $n$, then the treewidth of $G$ is $\Omega(n)$.
 \end{proposition}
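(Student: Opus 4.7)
The plan is to exploit the well-known balanced separator characterisation of treewidth: any graph with $\tw(G) \leq k$ admits a vertex subset $X \subseteq V(G)$ with $|X| \leq k+1$ such that every connected component of $G[V \setminus X]$ contains at most $n/2$ vertices. I would invoke this as a standard fact, whose proof goes via choosing $X$ to be the bag of a centroid node in an optimal tree decomposition of $G$. The strategy is then to combine this separator with the $c$-expansion property of $G$ to force $|X|$, and therefore $\tw(G)$, to be linear in $n$.

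Given such a separator $X$, I would greedily distribute the components of $G[V \setminus X]$ into two sets $A$ and $B$, adding each component to whichever side is currently smaller. Since every component has size at most $n/2$, the final sizes (WLOG $|A| \leq |B|$) satisfy $|B| - |A| \leq n/2$; combined with $|A| + |B| = n - |X|$, this yields $|A| \geq n/4 - |X|/2$. The key property of this partition is that no edges of $G$ run between $A$ and $B$, so $N(A) \subseteq X$.

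I would then conclude by a short case analysis on $|X|$. If $|X| \geq n/4$, we already have $\tw(G) \geq |X| - 1 = \Omega(n)$. Otherwise $|X| < n/4$, in which case the bound above gives $|A| \geq n/8$, and simultaneously $|A| \leq (n - |X|)/2 \leq n/2$; the latter is exactly what is needed to invoke the $c$-expansion property, yielding $|N(A)| \geq c \cdot |A| \geq cn/8$. Since $N(A) \subseteq X$, we obtain $|X| \geq cn/8$, hence $\tw(G) \geq cn/8 - 1 = \Omega(n)$.

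The only subtlety is that the $c$-expansion property only applies to vertex sets of cardinality at most $n/2$; this is why both the greedy partitioning (to ensure $|A|$ is a constant fraction of $n$) and the case split on $|X|$ (to ensure $|A| \leq n/2$ in the nontrivial regime) are required. Once the balanced separator lemma is cited, the remainder of the argument reduces to elementary arithmetic.
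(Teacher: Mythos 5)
Your proof is correct. Note that the paper does not prove this proposition at all: it is imported as a black box from \cite{FlumG06}, so there is no in-paper argument to compare against; what you have written is essentially the standard proof underlying the cited result. All the steps check out: the balanced-separator lemma follows from taking the bag of a centroid node of an optimal tree decomposition; the greedy bipartition gives $|B|-|A|\leq n/2$ and hence $|A|\geq n/4-|X|/2$, with $N(A)\subseteq X$ because $A$ is a union of components of $G-X$; and the case split on $|X|$ guarantees both $|A|\geq n/8$ and $|A|\leq n/2$, so the expansion hypothesis applies and yields $|X|\geq cn/8$. The resulting bound $\tw(G)\geq \min(n/4,\,cn/8)-1$ is $\Omega(n)$ for any fixed $c>0$, as required.
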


In addition, it is well known (see~\cite{ACP87}) that the treewidth of a graph $G$ is the minimum size (minus $1$) of the maximum clique among all chordal completions of $G$. Consequently, Proposition~\ref{prop:bigtw} immediately gives an $\Omega(n^2)$ lower bound on the number of edges in any chordal completion $H$ of a $c$-expander $G$, since $H$ must have a clique of size $\Omega(n)$. To conclude, remind that, as mentioned above, there exist constructions, both deterministic and random, of $c$-expanders having only $O(n)$ edges.

\smallskip
We now turn to the description of our $O(n+m\log^2 n)$-time algorithm for inclusion-minimal cograph completion.

\subsection{Data structure}

Our data structure is composed of two copies of the cotree: one stored in a basic data structure and one using the advanced dynamic data structure of~\cite{ST83} named \emph{dynamic trees}.
We note that we could use only the advanced data structure of~\cite{ST83}, as it can be patched to contain the additional information that we store in the basic data structure. But to avoid questions about the compatibility of such a patch with the performances of the data structure of~\cite{ST83}, we prefer to store the additional information we need, and to perform the related operations, independently in another structure. This is the reason why we describe our algorithm using two structures.

In the first copy of $T$ (the basic data structure), each node $u$ stores its parent, the list of the children of $u$ and their number $|\cur{C}(u)|$, as well as a bidirectional couple of pointers to the corresponding node of $u$ in the second copy of $T$, so that we can move from one element in one copy of the cotree to the same element in the other copy in $O(1)$ time. In addition, we enhance this basic data structure storing the cotree with one additional feature: given a node $u$ and two of its children $u_1,u_2$, this feature allows us to determine which of $u_1,u_2$ appears first in the list of children of $u$ in $O(1)$ time. To this purpose, the set of children of a node $u$ is not only stored in a doubly linked list, as in the classical version of the tree, but a copy of this list is also stored using the \emph{order data structure} of~\cite{BCD+02,DS87}. This data structure allows to answer \emph{order queries}, i.e. which of two given elements of the list precedes the other one, and supports two update operations, \texttt{insert} and \texttt{delete}. The delete operation removes a given element from the order data structure while the insert operation insert a new element in the order data structure just after a specified element. The order query and the two update operations all take $O(1)$ worst-case time.

\paragraph*{Dynamic trees~\cite{ST83}}
In addition to the classical data structure described above, we also use the data structure developed in~\cite{ST83} to store a copy of the cotree $T$ and maintain it at each incremental step. This data-structure maintains a dynamic forest rather than a single tree. This will be useful for us as we will cut a part of the cotree and attach it to another node during the update of the cotree under the insertion of a new vertex. The dynamic trees of~\cite{ST83} allow to answer the two following kinds of query:

\begin{description}
\item[\texttt{lowest-common-ancestor?}] Given two nodes $u$ and $v$ of $T$, provide the lowest common ancestor $lca(u,v)$ of $u$ and $v$.

\item[\texttt{next-step-to-descendant?}] Given a node $u$ of $T$ and one of its strict descendants $v$, provide the (unique) child of $u$ which is an ancestor of $v$.
\end{description}

These two kinds of query are handled in $O(\log n)$ worst-case time in the data structure of Sleator and Tarjan~\cite{ST83}. To be precise, the second operation is not described in~\cite{ST83}, but it can be obtained as a combination of other operations they provide. Indeed, their data structure also supports, in the same complexity:
\begin{itemize}
\item an update operation called \texttt{evert}$(u)$ which, given a vertex $u$ of $T$, makes $u$ become the root of $T$, and
\item a query operation named \texttt{root?}$(u)$ that provides the root of the tree $T$ to which node $u$ belongs.
\end{itemize}

Then, the query \texttt{next-step-to-descendant?}$(u,v)$ we use here can be resolved by the sequence of operations (two updates and two queries): $r=$\texttt{root?}$(u)$, \texttt{evert}$(v)$, \texttt{parent?}$(u)$, \texttt{evert}$(r)$, which takes $O(\log n)$ time. 

Along our incremental algorithm, we need to maintain the dynamic data structure of~\cite{ST83}, which can be done thanks to the following update operations:

\begin{description}
\item[\texttt{cut}.] Given a node $u$ in a tree $T$ of the forest $F$ such that $u$ is not the root of $T$, remove the edge between $u$ and $parent(u)$. Then, $u$ becomes the root of its new tree $T'$ in forest $F$.

\item[\texttt{link}.] Given a node $u$ in a tree $T$ of the forest $F$ such that $u$ is not the root of $T$ and given the root $v$ of a tree $T'\neq T$, make $u$ the parent of $v$.
\end{description}

Note that operations \texttt{cut} and \texttt{link} are converse of each other. As for queries, all update operations takes $O(\log n)$ worst-case time.

\subsection{Algorithm}

Our algorithm determines the set $W$ of the nodes that are simultaneously eligible, non-hollow and non-completion-forced and that are minimal for the ancestor relationship among nodes having these three properties (i.e. none of their descendants satisfies the considered property). Then, it picks any of them to be the insertion node of the minimal completion returned at this incremental step. Indeed, since nodes of $W$ satisfy the conditions of Lemma~\ref{lem:explore} and none of their children does (because nodes of $W$ are minimal for the ancestor relationship), it follows that nodes of $W$ are completion-minimal insertion nodes. In order to get the improved $O(n+m\log^2 n)$ complexity, we avoid to completely search the upper tree $T^{up}_{N(x)}$ to determine $W$. Instead, we use a limited number of \texttt{lowest-common-Ancestor?} queries.

Clearly, if a parallel node $u$ of $T$ is the $lca$ of two leaves in $N(x)$ then $T_u\setminus\set{u}$ contains no eligible node. Let $P_{max}$ be the set of parallel common ancestors of vertices of $N(x)$ that are maximal for the ancestor relationship and let us denote $W'=P_{max}\cup N_{out}$, where $N_{out}$ is the set of vertices of $N(x)$ that are not descendant of any node in $P_{max}$, i.e. $N_{out}=N(x)\setminus \bigcup_{p\in P_{max}} V(p)$. Note that all the nodes $w'\in W'$ are eligible, and so are their ancestors. It follows that the set $W$ that we want to compute is the set of the non-completion-forced nodes in the upper tree $T^{up}_{W'}$ that are minimal for the ancestor relationship (i.e. none of their descendants in $T^{up}_{W'}$ are not completion-forced).

\subsubsection{Finding an inclusion-minimal insertion node.}

In order to compute $W$, we start by computing the tree $\widetilde{T}=T^{xtr}_{N(x)\cup A_x}$ extracted from (see Section~\ref{sec:prel}) the leaves that belong to $N(x)$ and the set $A_x$ of their lowest common ancestors, i.e. nodes $u$ such that $u=lca(l_1,l_2)$ for some leaves $l_1,l_2\in N(x)$. Then, we search $\widetilde{T}$ to find its parallel nodes $P_{max}$ that are maximal for the ancestor relationship and we remove their strict descendants. The leaves of the resulting tree are exactly nodes of $W'$. Finally, for each node $w'\in W'$ we determine its lowest non-completion-forced ancestor $nfa(w')$ in $T$ and we keep only the $nfa(w')$'s that are minimal for the ancestor relationship: this is the set $W$. It is worth noting from the beginning that since $\widetilde{T}$ has exactly $d$ leaves and since all its internal nodes have degree at least $2$, then the size of $\widetilde{T}$ is $O(d)$.

Let us now show how to compute $\widetilde{T}$ in $O(d\log^2 n)$ time. To this purpose, we sort the neighbours of $x$ according to a special order of the vertices of the cograph $G$ called a \emph{factorising permutation}~\cite{CHM02}. A factorising permutation is the order in which the vertices of $G$ (which are the leaves of the cotree) are encountered when performing a depth-first search of the cotree $T$. There are as many different factorising permutations as different depth-first search of $T$. Here, we use the factorising permutation $\pi$ which is obtained by visiting the children of one node $u$ of $T$ in the order they are stored in the list of the children of $u$ used in the implementation of the cotree. To determine whether a vertex $y_1$ is before or after a vertex $y_2$ in the factorising permutation $\pi$, we can proceed as follows: 1) find $u=lca(y_1,y_2)$ and find the two children $u_1$ and $u_2$ of $u$ that are respectively the ancestor of $y_1$ and $y_2$, and 2) determine whether $u_1$ is before or after $u_2$ in the list of children of $u$. Operation 1) can be executed in $O(\log n)$ time thanks to the data structure of~\cite{ST83} by performing one \texttt{lowest-common-ancestor?} query and two \texttt{next-step-to-descendant?} queries. Operation 2) can be executed in $O(1)$ time using the order data structure of~\cite{BCD+02,DS87}. Then, comparing the order of occurrence of two vertices $y_1$ and $y_2$ in $\pi$ takes $O(\log n)$ time and totally, sorting all the neighbours of $x$ respectively to order $\pi$ takes $O(d\log d\log n)=O(d\log^2 n)$ time.

The benefit of doing so is that, once the neighbours of $x$ are sorted in the order $x_1,x_2,\ldots, x_d$ in which they appear in $\pi$ (we say from left to right), we can build $\widetilde{T}$ efficiently. We consider the neighbours of $x$ one by one in this order and at each step we compute the tree $T_i$ extracted from $\set{x_1,\ldots,x_{i}}$ and their lowest common ancestors. Then, at the end of the computation, when $i=d$, we obtain $T_d=\widetilde{T}$. For each $i$ between $2$ and $k$, we obtain $T_i$ from $T_{i-1}$ as follows: we compute $v_i=lca(x_{i-1},x_i)$ and we insert it at its correct position in the tree $T_{i-1}$ built so far.

Note that, since we consider the $x_i$'s from left to right in the order of the factorising permutation $\pi$, the newly computed common ancestor $v_i$ is the only node that may be in $T_i$ but not in $T_{i-1}$. Moreover, for the same reason, if $v_i$ is not yet a node of $T_{i-1}$ then $v_i$ has to be inserted on the rightmost branch of the tree $T_{i-1}$, and if $v_i$ is already a node of $T_{i-1}$ then $v_i$ already belongs to this branch, and so we discover it when we try to insert it on this branch. In order to do so, we climb up the rightmost branch of $T_i$, starting from the father of $x_{i-1}$, and for each node $v$ encountered on this branch we determine whether $v_i$ is higher or lower than $v$ in the tree (or eventually equal) by computing $lca(v,v_i)$. The total number of comparisons (treated by $lca$ queries) made along the computation of $T_d$ is $O(d)$. Indeed, as explained in~\cite{GBT84}, every time we pass above a node $v$ on the rightmost branch, $v$ leaves the rightmost branch for ever and will then never participate again to any comparison. Then, the total number of $lca$ queries we need to built $\widetilde{T}$ (including the $d-1$ queries made on the pairs of neighbours of $x$ appearing consecutively in the order of the factorising permutation) is proportional to its size, that is $O(d)$. Since each of these queries takes $O(\log n)$ time thanks to the data structure of~\cite{ST83}, the complexity of building $\widetilde{T}$ from the sorted list of neighbours of $x$ is $O(d\log n)$.

Once $\widetilde{T}$ is built, a simple search starting from its root determines the set $P_{max}$ of its parallel nodes that are maximal for the ancestor relationship, and we cut off from $\widetilde{T}$ all the subtrees rooted at the children of nodes in $P_{max}$. The leaves of the resulting tree are precisely the nodes of $W'$ that we wanted to determine. As $\widetilde{T}$ has size $O(d)$, this step takes $O(d)$ time.

Then, for each $w'\in W'$, we determine its lowest non-completion-forced ancestor $nfa(w')$ in $T$. From the definition of $P_{max}$, the lowest parallel ancestor of $w'$ is non-completion-forced. Then, $nfa(w')$ cannot be higher in $T$ than the grand-parent of $w'$. It follows that we have to check the non-completion-forced condition only for $w'$ and its parent, which can be done, for each of them $u$, in $O(|\cur{C}_{nh}(u)|)$ time. Then, we remove the $nfa(w')$'s that are not minimal for the ancestor relationship to obtain the set $W$, this takes $O(d)$ time, and

We now need to find the non-completion-forced nodes of the upper tree $T^{up}_W$ that are minimal for the ancestor relationships. To that purpose, for each $w'\in W'$, we determine its lowest non-completion-forced ancestor $nfa(w')$ in $T$. From the definition of $P_{max}$, the lowest parallel ancestor of $w'$ is non-completion-forced. Therefore, $nfa(w')$ cannot be higher in $T$ than the grand-parent of $w'$. It follows that we have to check the non-completion-forced condition only for $w'$ and its parent, which can be done as follows. If $w'$ is a leaf of $T$, i.e. $w'\in N_{out}$, then $w'$ is forced. If $w'$ is a parallel node of $T$, i.e. $w'\in P_{max}$, then $w'$ is forced iff its number of children in $\widetilde{T}$ equals its number of children in $T$. Now, for the parent $v$ of $W'$, if $v$ is a parallel node, as noted above, $v$ is necessarily non-completion-forced. Otherwise, if $v$ is a series node, $v$ is completion-forced iff i) $v$ belongs to $\widetilde{T}$ and ii) its number of children in $\widetilde{T}$ equals its number of children in $T$, and iii) all its children in $T$ belong to $W'$ and iv) all its children in $T$ are forced (cf. conditions given above for $w'$). If none of $w'$ and $parent(w')$ is non-completion-forced, then necessarily $parent(parent(w'))$ is. As testing these conditions for one node $u$ takes $O(|\cur{C}_{nh}(u)|)$ time, then determining the $nfa(w')$'s for all nodes $w'\in W'$ takes $O(d)$ time. Finally, we determine the $nfa(w')$'s that are minimal for the ancestor relationship, i.e. the nodes of $W$, by searching $T$ upward on at most two levels, starting from each of the nodes in $W'$. This also takes $O(d)$ time. Then, we arbitrarily pick one node $w$ in $W$ and the minimal completion of the neighbourhood of $x$ returned by the algorithm is the one anchored at $w$. Therefore, the total complexity of finding one completion-minimal insertion node in one incremental step of the algorithm is $O(d+d\log n+d\log^2 n)=O(d\log^2 n)$.

\begin{figure}
\begin{center}
\resizebox{.95\linewidth}{!}{
\input{./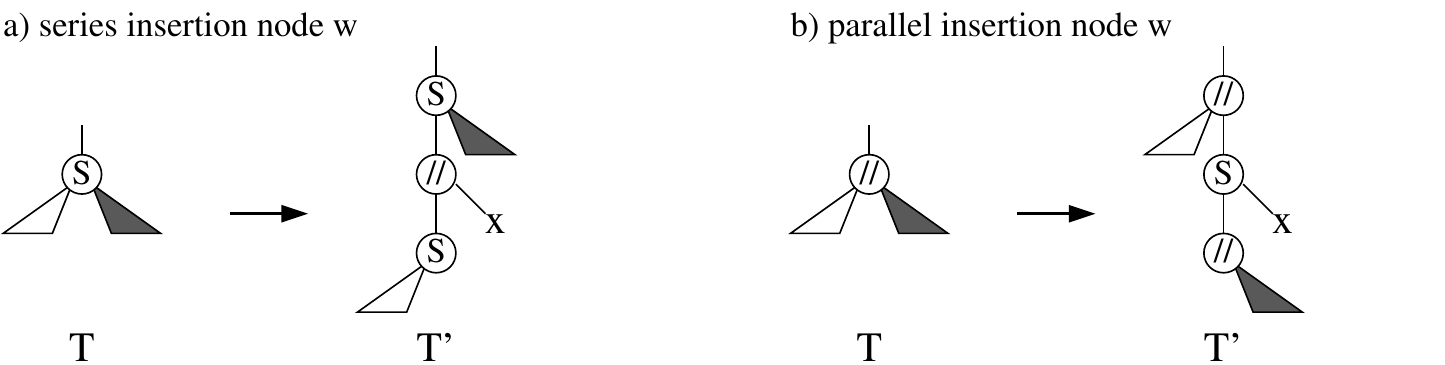_t}
}
\caption{Modification of the cotree under the insertion of $x$ at insertion node $w$. The triangles in black (resp. white) correspond to the parts of the tree that are filled (resp. that remain hollow) in the completion anchored at $w$.}
\label{fig:modif-tree}
\end{center}
\end{figure}

\subsubsection{Updating the data structure.}

In the previous section, we showed how to determine the insertion node $w$ and the list of its children to be filled. Then, depending on whether $w$ is a parallel or a series node, the cotree $T$ must be modified as shown in Figure~\ref{fig:modif-tree}, and the data structure of~\cite{ST83} associated to the cotree must be updated accordingly. The key for doing so while preserving the $O(d\log^2 n)$ time complexity is to perform operations involving only the non-hollow children of $w$. Indeed, their number is $O(d)$, while the number of the hollow children of $w$ can be up to $\Omega(n)$ and arbitrarily large compared to $d$.

After the insertion of $x$, the insertion node $w$ is replaced by three nodes, see Figure~\ref{fig:modif-tree}. Two of them have the same label as $w$: one $w_h$ has for children the hollow children of $w$ and the other one $w_{nh}$ has for children the non-hollow children of $w$. In order to preserve the complexity, it is important to form these two nodes as follows. We cut from $w$ its non-hollow children and its parent, we then obtain $w_h$, still linked to its correct children. Then, we link all the non-hollow children of $w$ to a new node $w_{nh}$. This takes $O(d\log n)$ as it requires $O(d)$ \texttt{cut} and \texttt{link} operations, each of which is supported in $O(\log n)$ time by the data structure of~\cite{ST83}, and the corresponding \texttt{delete} and \texttt{insert} operations in the order data structure storing the lists of children of the nodes in the tree take $O(1)$ time. The rest of the transformations in order to get the new tree as depicted in Figure~\ref{fig:modif-tree} only requires 4 \texttt{link} operations. Thus, the time complexity of updating the data structure in one incremental step is $O(d\log n)$.

As a conclusion, the complexity of one incremental step of the algorithm is $O(d\log^2 n)$ and overall, the complexity of the whole algorithm is $O(n+m\log^2 n)$.

\section{Conclusion and perspectives}\label{sec:conclu}

We designed two incremental algorithms for computing an inclusion-minimal cograph completion of an arbitrary graph $G$. The first one has a time complexity of $O(n+m')$, where $m'$ is the number of edges in the output completion, which matches the complexity of \cite{LMP10}'s algorithm. The specificity of our algorithm is that, within this complexity, it is able to compute a minimum-cardinality completion of the neighbourhood of the new vertex $x$ introduced at each step of the incremental algorithm, which is a highly desirable feature in practice to obtain inclusion-minimal completions of small cardinality. The way we achieved this is by scanning at each incremental step the set of all possible minimal completions of the neighbourhood of $x$. This is particularly interesting as, beside the minimum-cardinality criteria, this opens the possibility of choosing the completion selected by the algorithm using any criteria one wishes.

Our second algorithm improves the time complexity of computing an inclusion-minimal cograph completion of $G$ to $O(n+m\log^2n)$. This improvement is motivated by the fact that, as we gave evidence for it, many graphs (namely those having the expansion property) that have only $O(n)$ edges require $\Omega(n^2)$ edges in any of their cograph completions. Unfortunately, we obtained this improved complexity at the price of giving up the additional feature obtained in the first algorithm, namely computing a minimum-cardinality completion of the neighbourhood of $x$ at each incremental step. Therefore, the first open question arising from our work is whether it is possible to provide this functionality within the $O(n+m\log^2n)$ time complexity, or at least within a time complexity of the form $O(n+m\, \text{polylog}(n))$.

The question of improving further the time complexity, when expressed with regard to the size of the input graph, of computing an inclusion-minimal cograph completion is also open. Although it seems difficult to reach a linear complexity with the techniques we use here, nothing indicates that the $O(n+m\log^2n)$ complexity we obtained could not be improved further, say for example to $O(n+m\log n)$. Such an improvement would be very valuable both in theory and in practice for dealing with very large real-world networks~\cite{JGG+15,HWL+15,Cre17}.

Another appealing perspective is to design algorithms that are able to use not only addition of edges but also deletion of edges in order to minimally modify an arbitrary graph into a cograph. What is the best complexity that can be achieved for the general cograph editing problem (where both addition and deletion of edges are allowed)? Is it possible, in this case as well, to design an incremental algorithm that provides a minimum-cardinality modification of the neighbourhood of $x$ at each incremental step? The behaviour of the general cograph editing problem seems quite different from the one of the pure completion (or pure deletion) problem. Therefore, answering these questions would significantly contribute to leverage our understanding of graph modification problems.



\bibliographystyle{splncs03}
\bibliography{mod_cographLinearity,mod-cograph-completion,MIC}

\begin{thebibliography}{10}
\providecommand{\url}[1]{\texttt{#1}}
\providecommand{\urlprefix}{URL }

\bibitem{Alo86}
Alon, N.: Eigenvalues and expanders. Combinatorica  6(2),  83--96 (1986)

\bibitem{ACP87}
Arnborg, S., Corneil, D.G., Proskurowski, A.: Complexity of finding embeddings
  in a k-tree. SIAM J. Algebraic Discrete Methods  8(2),  277--284 (1987)

\bibitem{BCD+02}
Bender, M.A., Cole, R., Demaine, E.D., Farach-Colton, M., Zito, J.: Two
  simplified algorithms for maintaining order in a list. In: 10th European
  Symposium on Algorithms (ESA 2002), LNCS, vol. 2461, pp. 152--164. Springer
  (2002)

\bibitem{BHS03}
Berry, A., Heggernes, P., Simonet, G.: The minimum degree heuristic and the
  minimal triangulation process. In: 29th International Workshop on
  Graph-Theoretic Concepts in Computer Science (WG 2003). LNCS, vol. 2880, pp.
  58--70. Springer (2003)

\bibitem{BHT01}
Blair, J.R., Heggernes, P., Telle, J.A.: A practical algorithm for making
  filled graphs minimal. Theoretical Computer Science  250(1-2),  125--141
  (2001)

\bibitem{BK06}
Bodlaender, H.L., Koster, A.M.: Safe separators for treewidth. Discrete
  Mathematics  306(3),  337--350 (2006)

\bibitem{BDFHW95}
Bodlaender, H., Downey, R., Fellows, M., Hallett, M., Wareham, H.:
  Parameterized complexity analysis in computational biology. Comput. Appl.
  Biosci.  11,  49--57 (1995)

\bibitem{BKK+98}
Bodlaender, H., Kloks, T., Kratsch, D., M{\"u}ller, H.: Treewidth and minimum
  fill-in on d-trapezoid graphs. J. Graph Algorithms Appl.  2(5),  1--23 (1998)

\bibitem{BT01}
Bouchitté, V., Todinca, I.: Treewidth and minimum fill-in: Grouping the
  minimal separators. SIAM Journal on Computing  31(1),  212--232 (2001)

\bibitem{BHS+15}
Brandes, U., Hamann, M., Strasser, B., Wagner, D.: Fast quasi-threshold
  editing. In: 23rd European Symposium on Algorithms (ESA 2015). LNCS, vol.
  9294, pp. 251--262. Springer (2015)

\bibitem{BLS99}
Brandst{\"a}dt, A., Le, V., Spinrad, J.: Graph Classes: a Survey. SIAM
  Monographs on Discrete Mathematics and Applications, SIAM (1999)

\bibitem{BDK00}
Broersma, H., Dahlhaus, E., Kloks, T.: A linear time algorithm for minimum
  fill-in and treewidth for distance hereditary graphs. Discrete Applied
  Mathematics  99(1-3),  367--400 (2000)

\bibitem{Cai96}
Cai, L.: Fixed-parameter tractability of graph modification problems for
  hereditary properties. Information Processing Letters  58(4),  171--176
  (1996)

\bibitem{CHM02}
Capelle, C., Habib, M., de~Montgolfier, F.: Graph decompositions and
  factorizing permutations. Discrete Mathematics and Theoretical Computer
  Science  5(1),  55--70 (2002)

\bibitem{CLS81}
Corneil, D., Lerchs, H., Burlingham, L.S.: Complement reducible graphs.
  Discrete Applied Mathematics  3(3),  163--174 (1981)

\bibitem{CPS85}
Corneil, D., Perl, Y., Stewart, L.: A linear time recognition algorithm for
  cographs. SIAM Journal on Computing  14(4),  926--934 (1985)

\bibitem{CP06}
Crespelle, C., Paul, C.: Fully dynamic recognition algorithm and certificate
  for directed cographs. Discrete Applied Mathematics  154(12),  1722--1741
  (2006)

\bibitem{Cre17}
Crespelle, C.: Structures of Complex Networks and of their Dynamics.
  Habilitation thesis ({HDR}), Universit{\'e} Claude Bernard Lyon 1 (2017),
  https://hal.archives-ouvertes.fr/tel-01672082/

\bibitem{CPT15}
Crespelle, C., Perez, A., Todinca, I.: An {$O(n^2)$}-time algorithm for the
  minimal permutation completion problem. In: 41st International Workshop on
  Graph-Theoretic Concepts in Computer Science (WG 2015). LNCS, Springer
  (2015), to appear

\bibitem{CT13}
Crespelle, C., Todinca, I.: An \emph{O}(\emph{n}\({}^{\mbox{2}}\))-time
  algorithm for the minimal interval completion problem. Theor. Comput. Sci.
  494,  75--85 (2013)

\bibitem{DS87}
Dietz, P., Sleator, D.: Two algorithms for maintaining order in a list. In:
  19th ACM Symposium on Theory of Computing (STOC 1987). pp. 365--372. ACM
  (1987)

\bibitem{DGH+05}
Dom, M., Guo, J., Huffner, F., Niedermeier, R.: Error compensation in leaf
  power problems. Algorithmica  44(4),  363--381 (2005)

\bibitem{DDL+15}
Drange, P.G., Dregi, M.S., Lokshtanov, D., Sullivan, B.D.: On the threshold of
  intractability. In: 23rd European Symposium on Algorithms (ESA 2015). LNCS,
  vol. 9294, pp. 411--423. Springer (2015)

\bibitem{FlumG06}
Flum, J., Grohe, M.: Parameterized Complexity Theory. Texts in Theoretical
  Computer Science. An {EATCS} Series, Springer (2006)

\bibitem{GBT84}
Gabow, H., Bentley, J., Tarjan, R.: Scaling and related techniques for geometry
  problems. In: 16th ACM Symposium on Theory of Computing (STOC 1984). pp.
  135--143. ACM (1984)

\bibitem{GGK+95}
Goldberg, P., Golumbic, M., Kaplan, H., Shamir, R.: Four strikes against
  physical mapping of {DNA}. J. Comput. Biol.  2,  139--152 (1995)

\bibitem{GHP12}
Guillemot, S., Havet, F., Paul, C., Perez, A.: On the \mbox{(non-)}existence of
  polynomial kernels for {$P_l$}-free edge modification problems. Algorithmica
  65(4),  900--926 (2012)

\bibitem{HMP08}
Heggernes, P., Mancini, F., Papadopoulos, C.: Minimal comparability completions
  of arbitrary graphs. Discrete Applied Mathematics  156(5),  705--718 (2008)

\bibitem{HTV05}
Heggernes, P., Telle, J.A., Villanger, Y.: Computing minimal triangulations in
  time ${O}(n^{\alpha \log n}) = o(n^{2.376})$. SIAM J. Discrete Math.  19(4),
  900--913 (2005)

\bibitem{HM09}
Heggernes, P., Mancini, F.: Minimal split completions. Discrete Applied
  Mathematics  157(12),  2659--2669 (2009)

\bibitem{HFW+15}
Hellmuth, M., Fritz, A., Wieseke, N., Stadler, P.F.: Techniques for the cograph
  editing problem: Module merge is equivalent to editing {P4s}. CoRR
  abs/1509.06983 (2015)

\bibitem{HWL+15}
Hellmuth, M., Wieseke, N., Lechner, M., Lenhof, H.P., Middendorf, M., Stadler,
  P.F.: Phylogenomics with paralogs. PNAS  112(7),  2058--2063 (2015)

\bibitem{hoory2006expander}
Hoory, S., Linial, N., Wigderson, A.: Expander graphs and their applications.
  Bulletin of the American Mathematical Society  43(4),  439--561 (2006)

\bibitem{JGG+15}
Jia, S., Gao, L., Gao, Y., Nastos, J., Wang, Y., Zhang, X., Wang, H.: Defining
  and identifying cograph communities in complex networks. New Journal of
  Physics  17(1),  013044 (2015)

\bibitem{KST04}
Kaplan, H., Shamir, R., Tarjan, R.E.: Tractability of parameterized completion
  problems on chordal and interval graphs: minimum fill-in and physical
  mapping. In: 35th Symposium on Foundations of Computer Science (FOCS 1994).
  pp. 780--791 (1994)

\bibitem{K93}
Karp, R.: Mapping the genome: some combinatorial problems arising in molecular
  biology. In: 25th ACM Symposium on Theory of Computing (STOC 1993). pp.
  278--285. ACM (1993)

\bibitem{K69}
Kendall, D.: Incidence matrices, interval graphs, and seriation in archeology.
  Pacific J. Math.  28,  565--570 (1969)

\bibitem{KKW98}
Kloks, T., Kratsch, D., Wong, C.: Minimum fill-in on circle and circular-arc
  graphs. Journal of Algorithms  28(2),  272--289 (1998)

\bibitem{KKS97}
Kloks, T., Kratsch, D., Spinrad, J.: On treewidth and minimum fill-in of
  asteroidal triple-free graphs. Theoretical Computer Science  175(2),
  309--335 (1997)

\bibitem{LWG+12}
Liu, Y., Wang, J., Guo, J., Chen, J.: Complexity and parameterized algorithms
  for cograph editing. Theoretical Computer Science  461,  45--54 (2012)

\bibitem{LMP10}
Lokshtanov, D., Mancini, F., Papadopoulos, C.: Characterizing and computing
  minimal cograph completions. Discrete Appl. Math.  158(7),  755--764 (2010)

\bibitem{Man08}
Mancini, F.: Graph Modification Problems Related to Graph Classes. Ph.D.
  thesis, University of Bergen, Norway (2008)

\bibitem{Man10}
Mancini, F.: Minimum fill-in and treewidth of split+ke and split+kv graphs.
  Discrete Applied Mathematics  158(7),  747--754 (2010)

\bibitem{Mei10}
Meister, D.: Treewidth and minimum fill-in on permutation graphs in linear
  time. Theoretical Computer Science  411(40-42),  3685--3700 (2010)

\bibitem{NSS00}
Natanzon, A., Shamir, R., Sharan, R.: A polynomial approximation algorithm for
  the minimum fill-in problem. SIAM J. Comput.  30(4),  1067--1079 (2000)

\bibitem{OMKF81}
Ohtsuki, T., Mori, H., Kashiwabara, T., Fujisawa, T.: On minimal augmentation
  of a graph to obtain an interval graph. Journal of Computer and System
  Sciences  22(1),  60--97 (1981)

\bibitem{OS06}
Oum, S., Seymour, P.: Approximating clique-width and branch-width. Journal of
  Combinatorial Theory, Series B  96(4),  514--528 (2006)

\bibitem{OumS07}
Oum, S., Seymour, P.D.: Testing branch-width. J. Comb. Theory, Ser. {B}  97(3),
   385--393 (2007)

\bibitem{RST08}
Rapaport, I., Suchan, K., Todinca, I.: Minimal proper interval completions.
  Inf. Process. Lett.  106(5),  195--202 (2008)

\bibitem{R72}
Rose, D.J.: A graph-theoretic study of the numerical solution of sparse
  positive definite systems of linear equations. In: Graph Theory and
  Computing. pp. 183--217 (1972)

\bibitem{ST83}
Sleator, D.D., Tarjan, R.E.: A data structure for dynamic trees. J. Comput.
  Syst. Sci.  26(3),  362--391 (1983)

\bibitem{S03}
Spinrad, J.P.: Efficient graph representations, Fields Institute Monographs,
  vol.~19. American Mathematical Society (2003)

\bibitem{vatshelle2012new}
Vatshelle, M.: New width parameters of graphs. Ph.d. thesis, University of
  Bergen, Department of Informatics (2012)

\bibitem{VHP+09}
Villanger, Y., Heggernes, P., Paul, C., Telle, J.A.: Interval completion is
  fixed parameter tractable. SIAM Journal on Computing  38(5),  2007--2020
  (2009)

\end{thebibliography}

\end{document}